\numberwithin{equation}{section}
\declaretheoremstyle[bodyfont=\it,qed=\qedsymbol]{noproofstyle}
\declaretheorem[numberlike=equation]{observation}
\declaretheorem[name=Observation,numbered=no]{observation*}
\declaretheorem[numberlike=equation]{theorem}
\declaretheorem[numberlike=equation,style=noproofstyle,name=Theorem]{theoremwp}
\declaretheorem[name=Theorem,numbered=no]{theorem*}
\declaretheorem[numberlike=equation]{lemma}
\declaretheorem[name=Lemma,numbered=no]{lemma*}
\declaretheorem[name=Corollary,numbered=no]{corollary*}
\declaretheorem[name=Proposition,numbered=no]{proposition*}
\declaretheorem[numberlike=equation]{claim}
\declaretheorem[name=Claim,numbered=no]{claim*}
\declaretheorem[name=Conjecture,numbered=no]{conjecture*}
\declaretheorem[name=Question,numbered=no]{question*}
\declaretheoremstyle[bodyfont=\it,qed=$\lozenge$]{defstyle} 
\declaretheorem[numberlike=equation,style=defstyle]{definition}
\declaretheorem[unnumbered,name=Definition,style=defstyle]{definition*}
\declaretheorem[unnumbered,name=Example,style=defstyle]{example*}
\declaretheorem[unnumbered,name=Notation=defstyle]{notation*}
\declaretheorem[unnumbered,name=Construction,style=defstyle]{construction*}
\declaretheorem[numberlike=equation,style=defstyle]{remark}
\declaretheorem[unnumbered,name=Remark,style=defstyle]{remark*}
\newcommand{\ckt}{\textbf{$\mathcal{C}$}}
\newcommand{\form}{\textbf{$\mathcal{F}$}}
\newcommand{\abp}{\textbf{$\mathcal{A}$}}
\newcommand{\size}{\operatorname{\mathsf{size}}}
\newcommand{\depth}{\operatorname{\mathsf{depth}}}
\newcommand{\coeff}{\operatorname{\mathsf{coeff}}}
\newcommand{\rt}{\mathsf{root}}
\renewcommand{\l}{\mathsf{left}}
\renewcommand{\r}{\mathsf{right}}
\renewcommand{\phi}{\varphi}
\renewcommand{\epsilon}{\varepsilon}
\newcommand{\abcd}{\mathsf{abecedarian}}
\def\abcncVF{\mathsf{abc \text{ - } VF}_{\text{nc}}}
\def\abcncVBP{\mathsf{abc \text{ - } VBP}_{\text{nc}}}
\def\abcncVP{\mathsf{abc \text{ - } VP}_{\text{nc}}}
\newcommand{\Det}[1]{\mathrm{Det}_{#1}}
\newcommand{\Perm}[1]{\mathrm{Perm}_{#1}}
\newcommand{\imm}[2]{\mathrm{IMM}_{#1,#2}}
\newcommand{\esym}[2]{\mathrm{ESYM}_{#1,#2}}
\newcommand{\oesym}[2]{\mathrm{ESYM}_{#1,#2}^{\mathsf{(ord)}}}
\newcommand{\chsym}[2]{\mathrm{CHSYM}_{#1,#2}}
\newcommand{\ochsym}[2]{\mathrm{CHSYM}_{#1,#2}^{\mathsf{(ord)}}}
\newcommand{\lchsym}[2]{\mathrm{\mathsf{linked}\_CHSYM}_{#1,#2}}
\newcommand{\shortECCC}[2]{\texttt{\href{http://eccc.hpi-web.de/report/\ifnumcomp{#1}{>}{93}{19}{20}#1/#2/}{eccc:TR#1-#2}}}
\newcommand{\parseECCC}[1]{
\StrSubstitute{#1}{TR}{}[\tmpstring]%
\IfSubStr{\tmpstring}{/}{ 
\StrBefore{\tmpstring}{/}[\ecccyear]%
\StrBehind{\tmpstring}{/}[\ecccreport]%
}{
\StrBefore{\tmpstring}{-}[\ecccyear]%
\StrBehind{\tmpstring}{-}[\ecccreport]%
}%
\shortECCC{\ecccyear}{\ecccreport}}
\title{Separating ABPs and Some Structured Formulas\\ in the Non-Commutative Setting}
\author{Prerona Chatterjee \thanks{Tata Institute of Fundamental Research, Mumbai, India. Email: \texttt{prerona.ch@gmail.com}. Research supported by the Department of Atomic Energy, Government of India, under project number RTI4001.}}
\begin{document}
\maketitle


\begin{abstract}
	The motivating question for this work is a long standing open problem, posed by Nisan \cite{N91}, regarding the relative powers of algebraic branching programs (ABPs) and formulas in the non-commutative setting.
	Even though the general question continues to remain open, we make some progress towards its resolution.
	To that effect, we generalise the notion of ordered polynomials in the non-commutative setting (defined by \Hrubes, Wigderson and Yehudayoff \cite{HWY11}) to define $\abcd$ polynomials and models that naturally compute them.
		
	Our main contribution is a possible new approach towards resolving the $\ncVF$ vs $\ncVBP$ question, via lower bounds against $\abcd$ formulas. 
	In particular, we show the following.
	
	\begin{quote}
		There is an explicit $n$-variate degree $d$ $\abcd$ polynomial $f_{n,d}(\vecx)$ such that 
		\begin{itemize}
	 		\item $f_{n, d}(\vecx)$ can be computed by an $\abcd$ ABP of size $O(nd)$;
		 	\item any $\abcd$ formula computing $f_{n, \log n}(\vecx)$ must have size that is super-polynomial in $n$.
		\end{itemize}
	\end{quote}
	We also show that a  super-polynomial lower bound against $\abcd$ formulas for $f_{\log n, n}(\vecx)$ would separate the powers of formulas and ABPs in the non-commutative setting.
\end{abstract}

\newpage

\section{Introduction}
Algebraic Circuit Complexity is the study of multivariate polynomials and their classification based on how hard it is to compute them, using various computational models. 
The most well studied model is that of algebraic circuits. 
These are directed acyclic graphs that use algebraic operations like addition and multiplication over some field or ring, to compute polynomials. 
When the underlying graph is only allowed to be a tree, the model is that of algebraic formulas.

The central question in this area is whether the class $\VNP$ (algebraic analogue of the class $\NP$) is contained in the class $\VP$ (algebraic analogue of the class $\P$).
Valiant \cite{V79} had shown that the permanent polynomial is complete for $\VNP$.
Thus, the $\VP$ vs $\VNP$ question essentially boils down to asking whether the $n \times n$ permanent can be computed by a $\poly(n)$-sized algebraic circuit.

In this paper, we are interested in polynomials that come from the non-commutative polynomial ring $\F \inangle{x_1, \ldots, x_n}$, where the indeterminates do not commute with each other (that is, $xy \neq yx$ for indeterminates $x$, $y$). 
As a consequence, any monomial in a non-commutative polynomial $f \in \F \inangle{x_1, \ldots, x_n}$ is essentially a string over the alphabet $\set{x_1, \ldots, x_n}$. 
This is a natural restriction and there has been a long line of work that studies non-commutative computation beginning with\footnote{Hyafil \cite{H77} had considered non-commutative computation before this, but the main result in that paper is unfortuantely false as shown in \cite{N91}} the seminal work of Nisan \cite{N91}.

It was shown by \Hrubes, Wigderson and Yehudayoff \cite{HWY10} that the non-commutative permanent polynomial is complete for the class $\ncVNP$ (the non-commutative version of $\VNP$).
Later Arvind, Joglekar and Raja \cite{AJR16} gave a natural polynomial that is complete for the class of $n$-variate non-commutative polynomials computable by $\poly(n)$-sized circuits (denoted by $\ncVP$). 

The question of whether the classes $\ncVP$ and $\ncVNP$ are different is the central open problem in the non-commuatative setting. 
Although the general question of showing lower bounds against non-commutative circuits remains open, there has been significant progress in restricted settings \cite{LMS16,LMP19,LLS19,ST18,FLOS20}.

With respect to the general question, \Hrubes, Wigderson and Yehudayoff \cite{HWY11} showed that a sufficiently strong super-linear lower bound for the classical sum-of-squares problem implies a separation between $\ncVP$ and $\ncVNP$.
In another related work, Carmosino, Impagliazzo, Lovett and Mihajlin \cite{CILM18} showed that proving mildly super-linear lower bounds against non-commutative circuits would imply exponential lower bounds against the same model.

One motivation for studying non-commutative computation is that it is possibly easier to prove strong lower bounds in this setting as compared to the usual commutative setting.
At least intuitively, it seems harder to \emph{cancel} monomials once they have been calculated when commutativity is not allowed amongst the variables. 
For example, the $n \times n$ determinant can be computed by an $O(n^3)$ algebraic circuit, but to the best of our knowledge there is no circuit for the non-commutative determinant of size $2^{o(n)}$. 
In fact, it was shown by Arvind and Srinivasan \cite{AS18} that if the non-commutative determinant had a $\poly$-sized circuit, then $\ncVP = \ncVNP$.

Even though a super-polynomial lower bound is not known for the non-commutative determinant against circuits, Nisan \cite{N91} gave an exponential lower bound on the number of gates in any formula computing it. 
In contrast, the best lower bound known against formulas in the commutative setting is quadratic (for the elementary symmetric polynomial) \cite{N66,K85,CKSV19}. 

A point to note about the lower bound given by Nisan however, is that the proof actually works for a computational model, called Algebraic Branching Programs (or ABPs), that is believed to be more general than algebraic formulas. 
In fact, Nisan \cite{N91} gave an exact characterisation for the size of any ABP computing a non-commutative polynomial. 
As far as we are aware, any lower bound known against general non-commutative formulas uses this characterisation and hence is essentially a lower bound against non-commutative ABPs itself.

The motivating question for this work is whether there is a separation between the powers of ABPs and formulas in the non-commutative setting. 
Let us denote the class of non-commutative polynomials over $n$ variables that can be computed by $\poly(n)$-sized ABPs by $\ncVBP$. 
Similarly, let $\ncVF$ denote the class of non-commutative polynomials over $n$ variables that can be computed by $\poly(n)$-sized formulas.
The question is essentially whether $\ncVBP$ is contained in $\ncVF$.

This question had been posed by Nisan \cite{N91}, and the only work we are aware of that has made some progress with respect to this question is the one by Lagarde, Limaye and Srinivasan \cite{LLS19}.
They show that certain syntactically restricted non-commutative formulas (called Unique Parse Tree formulas) can not compute $\imm{n}{n}$ unless they have size $n^{\Omega(\log n)}$.

In this paper, we study restrictions of a different kind.
From here on, we will only be talking about non-commutative computation unless specifically mentioned otherwise.

\subsection{Abecedarian Polynomials and Models That Compute Them}

In \cite{HWY11}, {\Hrubes} {\etal} had defined the notion of \emph{ordered} polynomials. 
A homogeneous polynomial of degree $d$ is said to be ordered if the set of variables it depends on can be partitioned into $d$ buckets such that variables occuring in position $k$ only come from the $k$-th bucket.
We generalise this notion by making the bucket indices \emph{position independent}.

In English, an abecedarian word is one in which all of its letters are arranged in alphabetical order \cite{MW19}. 
Since monomials are essentially words in the non-commutative setting, we call a polynomial $\abcd$ whenever the variables in all of its monomials are \emph{arranged alphabetically}.

We call $(X_1, \ldots, X_m)$ (or $\set{X_1, \ldots, X_m}$, since the ordering is clear from context) a bucketing system of size $m$, for the set of variables $\set{x_1, \ldots, x_n}$, if it forms an \emph{ordered} partition of $\set{x_1, \ldots, x_n}$.
A non-commutative polynomial $f \in \F \inangle{x_1, \ldots, x_n}$ is said to be $\abcd$ with respect to the bucketing system $\set{X_1, \ldots, X_m}$, if 
\begin{itemize}
	\item $\set{X_1, \ldots, X_m}$ is a bucketing system for $\set{x_1, \ldots x_n}$;
	\item the variables in each of the monomials in $f$ are arranged in \emph{non-decreasing} order of their bucket indices. 
\end{itemize}
A formal definition can be found in \autoref{sec:defn_abcd_poly}.

\subsubsection*{``Getting our Hands Dirty'' with Abecedarian Polynomials}

Before moving ahead, let us take a look at an example of an $\abcd$ polynomial.
Given a commutative polynomial $f \in \F[x_1, \ldots, x_n]$, define its non-commutative analogue, $f^{(\text{nc})}$ as follows.
\begin{quote}
	$f$ and $f^{(\text{nc})}$ look essentially the same, except that variables in every monomial in $f^{(\text{nc})}$ are arranged in non-decreasing order of their indices.
\end{quote}
Then, $f^{(\text{nc})}$ is $\abcd$ with respect to the bucketing system $\setdef{X_i}{X_i = \set{x_i}}$. 

Let us also look at a possibly important polynomial that is \emph{not} $\abcd$ with respect to the bucketing system $\setdef{X_i}{X_i = \set{x_i}}$.
Consider the \emph{arc-full rank polynomial}\footnote{The hard polynomial constructed by Dvir, Malod, Perifel and Yehudayoff \cite{DMPY12} to separate the powers of formulas and ABPs in the multilinear setting.} $f$, but as a non-commutative polynomial, $f'$, in the following sense. 
\begin{quote}
	Let $\abp$ be the ABP that computes $f$ and think of that as a non-commutative ABP $\abp'$. Then, $f'$ is the polynomial computed by $\abp'$. 
\end{quote}
It is not hard to see that across different monomials in $f'$, the order in which variables are arranged is not consistent.
Thus, $f'$ is not $\abcd$ with respect to the given bucketing system.

A final point to note before we move ahaed is that a polynomial might be $\abcd$ with respect to different bucketing systems\footnote{Every polynomial $f \in \F\inangle{x_1, \ldots, x_n}$ is $\abcd$ with respect to the bucketing system $\set{X}$ for $X = \set{x_1, \ldots, x_n}$.}. 
In fact, even the sizes of the different bucketing systems might be different. 
For example, the polynomial
\[
	\oesym{n}{d} = \sum_{1 \leq i_1 < \ldots < i_d \leq n} x^{(1)}_{i_1} \cdots x^{(d)}_{i_d}
\]
is $\abcd$ with respect to the bucketing system $\set{X_k = \setdef{x^{(k)}_i}{i \in [n]}}$ which has size $d$, as well as $\set{X_i = \setdef{x^{(k)}_i}{k \in [d]}}$ which has size $n$.

\subsubsection*{Abecedarian Models of Computation}

{\Hrubes} {\etal} \cite{HWY11} had defined \emph{ordered circuits}, a model naturally suited to compute ordered polynomials.
We generalise this notion to define circuits that naturally compute $\abcd$ polynomials.
Going one step further, we also define $\abcd$ ABPs and $\abcd$ formulas.

Suppose $f$ is an $\abcd$ polynomial with respect to the bucketing system $\set{X_1, \ldots, X_m}$. 
For any $1 \leq a \leq b \leq m+1$, $f[a,b)$ is a sub-polynomial of $f$ defined as follows.
\begin{itemize}
	\item For any $a \in [m+1]$, $f[a, a)$ is the constant term in $f$.
	\item For $1 \leq a < b \leq m+1$, $f[a,b)$ contains only those monomials of $f$ in which the first variable is from bucket $X_a$ and the last variable is from any of the buckets in the set $\set{X_a, \ldots, X_{b-1}}$. \qedhere
\end{itemize}

A circuit is said to be $\abcd$ if every gate $v$ in it can be labelled by a tuple $(a,b)$ such that if $f_v$ is the polynomial computed at that gate, then $f_v = f_v[a,b)$.
We call a formula $\abcd$ if it has a similar syntactic property at every gate. 
For formal definitions, see \autoref{def:ordCkt} and \autoref{def:ordForm} respectively.
On the other hand, an ABP is said to be $\abcd$ when every vertex in it can be labelled by a bucket index such that if $f$ is the polynomial computed between vertices labelled with indices $a$ and $b$ respectively, then $f = f[a,b+1)$. 
\autoref{def:ordABP} is a formal definition.

\subsection{Our Main Results}

Our main result is a super-polynomial separation between $\abcd$ formulas and ABPs.

\begin{restatable}[Separating Abecedarian Formulas and Abecedarian ABPs]{theorem}{mainThm}\label{thm:main}
	Define 
	\[
		\lchsym{n}{d}(\vecx) = \sum_{i_0=1}^{n} \inparen{\sum_{i_0 \leq i_1 \leq \ldots \leq i_d \leq n} x_{i_0, i_1} \cdot x_{i_1,i_2} \cdots x_{i_{d-1}, i_d}}
	\]
	to be the \emph{linked} complete homogeneous polynomial over $n$-variables of degree $d$.

	Note that $\lchsym{n}{d}(\vecx)$ is $\abcd$ with respect to the bucketing system $\setdef{X_i}{i \in [n]}$ where $X_i = \setdef{x_{i,j}}{i \leq j \leq n}$. With respect to this bucketing system,  
	\begin{enumerate}
		\item $\lchsym{n}{d}(\vecx)$ has an $\abcd$ ABP of size $O(nd)$;
		
		\item any $\abcd$ formula computing $\lchsym{n/2}{\log n}(\vecx)$ has size $n^{\Omega(\log \log n)}$.
	\end{enumerate}
	That is, there is a super-polynomial separation between $\abcd$ ABPs and $\abcd$ formulas.
\end{restatable}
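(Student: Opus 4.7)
The plan for Part (1) is a direct layered ABP construction. Take $d+1$ layers of $n$ vertices $v_{0,1}, \ldots, v_{d,n}$ with a source $s$ and sink $t$: for each $1 \le k \le d$ and each pair $i \le j$, place an edge from $v_{k-1,i}$ to $v_{k,j}$ labelled $x_{i,j}$, and attach $s$ and $t$ to the extremal layers. The $s$-to-$t$ paths are in bijection with weakly increasing chains $i_0 \le i_1 \le \cdots \le i_d \le n$, so the polynomial computed is exactly $\lchsym{n}{d}$. Assigning the bucket label $i$ to every $v_{k,i}$ makes the ABP abecedarian, since the edge $v_{k-1,i} \to v_{k,j}$ carries a variable from $X_i$, which lies in $f[i, j{+}1)$. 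The total size is $O(nd)$.

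For Part (2), my plan proceeds in three steps. Step (a), canonicalisation: apply the standard homogenisation to the given abecedarian formula so that every gate computes a homogeneous polynomial of a fixed degree $e_v$; this inflates the size by only a polynomial factor in the target degree $\log n$. Each gate now carries both an abecedarian label $(a,b)$ and a fixed degree $e$, and at every product gate these labels factor as $(a,b,e) = (a,b',e') \cdot (b',b,e{-}e')$ for syntactically determined $b'$ and $e'$.

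Step (b), lift to a set-multilinear polynomial. For each position $k \in [\log n]$ and each pair $i \le j \le n/2$, introduce a fresh variable $y^{(k)}_{i,j}$, and define
\[
\tilde f \;=\; \sum_{i_0 \le i_1 \le \cdots \le i_{\log n} \le n/2} y^{(1)}_{i_0,i_1}\, y^{(2)}_{i_1,i_2} \cdots y^{(\log n)}_{i_{\log n - 1},\, i_{\log n}},
\]
which is set-multilinear over the partition $Y^{(1)} \sqcup \cdots \sqcup Y^{(\log n)}$. Using the canonicalisation from (a), every gate has a syntactically determined \emph{position interval} inside any root monomial, so substituting $x_{i,j} \mapsto y^{(k)}_{i,j}$ at the appropriate position converts the abecedarian formula for $\lchsym{n/2}{\log n}$ into a set-multilinear formula for $\tilde f$ of comparable size. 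Step (c): prove an $n^{\Omega(\log \log n)}$ set-multilinear formula lower bound for $\tilde f$ via the partial-derivative / relative-rank technique. Under a random split of the $\log n$ positions into two halves, the coefficient matrix of $\tilde f$ should have near-maximum relative rank because of the rigid path structure of its monomials, while any small set-multilinear formula admits a split under which its relative rank is significantly suboptimal.

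The main obstacle is step (b): producing a faithful lift from an abecedarian formula to a set-multilinear one. Since abecedarian gates carry only bucket labels and not positions, one must show that after homogenisation every gate has a well-defined position interval inside every root monomial, and that this structure survives both sum and product gates. A careful induction (ensuring the two children of a sum gate carry matching $(a,b,e)$-labels, and hence matching position intervals) should pin this down. Once step (b) is in place, step (c) reduces to existing set-multilinear lower bound machinery, and the resulting bound pulls back to the original abecedarian formula, giving the claimed separation.
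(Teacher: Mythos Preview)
Your construction for Part (1) is correct and essentially identical to the paper's.

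For Part (2), Steps (a) and (b) are sound: homogenisation goes through as in the paper, and in a homogeneous formula every gate does carry a well-defined position interval, so the lift to the set-multilinear polynomial $\tilde f$ is legitimate. The genuine gap is Step (c). Your key claim --- that under a random split of the $\log n$ positions the coefficient matrix of $\tilde f$ has near-maximum relative rank --- is false. Take, for instance, the alternating partition $A=\{1,3,\ldots\}$, $B=\{2,4,\ldots\}$. Two row monomials over $A$ that agree on the ``interface'' indices $i_1,i_2,\ldots,i_{d-1}$ give identical rows, so the rank is exactly the number of non-decreasing $(d{-}1)$-tuples in $[n/2]$, namely $\binom{n/2+d-2}{d-1}$. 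But each part has $N=\binom{n/2+1}{2}=\Theta(n^2)$ variables, so the maximum possible rank is $N^{d/2}=\Theta(n^{d})$. For $d=\log n$ the ratio is $n^{-\Theta(\log\log n)}$, not close to $1$; the same calculation applies to a typical random split. Consequently no off-the-shelf relative-rank lower bound yields anything here: the deficit of $\tilde f$ is of the same order as the lower bound you are after, and the constants would have to be compared --- work you have not done and which does not obviously succeed. The ``rigid path structure'' you invoke is precisely what \emph{kills} the rank, since once the interface indices are fixed the left and right halves decouple.

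The paper's argument is entirely different and bypasses any direct rank calculation for a degree-$\log n$ polynomial. It is a \emph{degree amplification}: a hypothetical small abecedarian formula for $\lchsym{n/2}{\log n}$ is first homogenised and made ``linked'', and is then used as a \emph{template} into which one substitutes sub-polynomials of $\chsym{n/2}{D}$ for the leaves $x_{a,b}$, producing a small formula for $\chsym{n/2}{D\cdot\log n}$. Iterating this $O(\log n/\log\log n)$ times bootstraps a $\poly(n)$-size formula for $\chsym{n/2}{\log n}$ into one for $\chsym{n/2}{n/2}$, which in turn yields a homogeneous multilinear formula for $\esym{n-1}{n/2}$. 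At that point the Hrube\v{s}--Yehudayoff $n^{\Omega(\log n)}$ lower bound applies and gives the contradiction. The point is that the hard rank argument is invoked only after the degree has been pushed up to $\Theta(n)$; for degree $\log n$ no adequate direct lower bound is available, which is exactly where your plan stalls.
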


Our second main result shows that in certain settings, formulas computing $\abcd$ polynomials can be assumed to be $\abcd$ without loss of generality.

\begin{restatable}[Converting Formulas into Abecedarian Formulas]{theorem}{orderForm}\label{thm:orderForm}
	Let $f$ be an $\abcd$ polynomial with respect to a bucketing system of size $m$, and $\form$ be a formula of size $s$ computing $f$.
	If $m = O(\log s)$, then there is an $\abcd$ formula  $\form'$ computing $f$ of size $\poly(s)$.
\end{restatable}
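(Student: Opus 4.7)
The plan is to mimic the standard homogenization paradigm, but with respect to the abecedarian decomposition rather than degree. For each gate $v$ of $\form$ and each pair $(a,b)$ with $1 \le a \le b \le m+1$, let $g_v[a,b)$ denote the sub-polynomial of $g_v$ consisting of abecedarian monomials that start in bucket $X_a$ and end in some bucket from $\{X_a, \ldots, X_{b-1}\}$. The structural fact I would exploit is that abecedarian parts propagate multiplicatively: a product of two monomials is abecedarian iff both factors are abecedarian and the last bucket of the left is at most the first bucket of the right; consequently, any non-abecedarian monomial produced at an intermediate gate can never recombine into an abecedarian monomial via multiplication, so such monomials contribute nothing to the abecedarian output $f$.

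Using this, I would derive the clean recursions $(u + w)[a,b) = u[a,b) + w[a,b)$ and $(u \cdot w)[a,d) = \sum_{c} u[a,c) \cdot w[c,d)$ (summed over the valid transition indices), and then build $\form'$ by introducing, for each triple $(v,a,b)$, a new gate $v_{(a,b)}$ computing $g_v[a,b)$ and labelled with $(a,b)$. By construction every new addition gate combines like-labelled inputs, every new multiplication gate respects the bucket-transition condition, and the root's $(a,b)$-gates sum to $f$; so $\form'$ is an abecedarian formula in the sense of \autoref{def:ordForm}.

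The main obstacle is bounding the size of $\form'$. Each original multiplication fans out into $O(m^2)$ abecedarian multiplications, and in a tree-shaped $\form'$ the $(a,b)$-copies of children cannot share sub-formulas, so a naive implementation compounds a $\mathrm{poly}(m)$ factor per level of $\form$. To pull this down to $\mathrm{poly}(s)$ when $m = O(\log s)$, the plan is to first depth-reduce $\form$ to depth $O(\log s)$ via a non-commutative analogue of Brent's theorem, and then absorb the per-level blowup inside the $2^{O(m)} = \mathrm{poly}(s)$ budget by reorganizing the transition sum at each multiplication gate---for instance, using shared prefix or suffix accumulators over the transition index so that each $(a,c)$-child of a multiplication is copied only $O(1)$ rather than $O(m)$ times. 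Making this accounting actually fit within a polynomial budget, rather than degenerating to the naive $s^{O(\log \log s)}$ bound, is the step I expect to require the most care.
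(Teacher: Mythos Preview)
Your overall plan---depth-reduce first, then split each gate into $O(m^2)$ copies indexed by $[a,b)$---matches the paper's approach exactly, and your recursions for addition and multiplication are correct. The gap is precisely where you flag it: the size bound.

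Your proposed fix (``shared prefix or suffix accumulators over the transition index'') will not work in a formula, because a formula is a tree and cannot share sub-computations; any accumulator used by several siblings must be duplicated in full. So you are still stuck at the per-level multiplicative blowup, which gives $m^{O(\log s)} = s^{O(\log\log s)}$.

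The paper resolves this not by reorganizing the sum at each gate, but by a global path-counting argument in the style of Raz's low-degree homogenization. First build an abecedarian \emph{circuit} $\ckt$ (your recursions do exactly this; sharing is free in a circuit, so $\ckt$ has size $O(m^3 s)$ and depth $O(\log s)$). Then unravel $\ckt$ into a formula: the size of the resulting formula is at most $(\text{size of }\ckt)$ times the maximum number of distinct paths from any vertex to the root. The key observation is that along any such path
\[
(v_1,[a_1,b_1)) \to (v_2,[a_2,b_2)) \to \cdots \to (v_\ell,[a_\ell,b_\ell)) = \text{root},
\]
the intervals are nested: $a_\ell \le \cdots \le a_1$ and $b_1 \le \cdots \le b_\ell$. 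Writing the successive increments $\delta_j = a_j - a_{j+1}$ and $\delta_j' = b_{j+1} - b_j$, these are nonnegative integers summing to at most $m$, so the number of possible label sequences of length $\ell$ is at most $\binom{2\ell + m}{m}$. With $\ell = O(\log s)$ and $m = O(\log s)$ this is $2^{O(\log s)} = \poly(s)$.

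In short: the blowup is not ``$m$ per level, multiplied over $O(\log s)$ levels''; it is ``a total interval-widening budget of $m$, distributed additively over $O(\log s)$ levels,'' which is what turns the exponent from $O(\log s \cdot \log\log s)$ into $O(\log s)$. That monotonicity-plus-stars-and-bars step is the one idea your proposal is missing.
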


\noindent In other words, an $n^{\omega(1)}$ lower bound against $\abcd$ formulas computing any polynomial that is $\abcd$ with respect to a bucketing system of size $O(\log n)$, would result in a super-polynomial lower bound against general non-commutative formulas.\\

\noindent These statements suggest a new approach towards resolving the general $\ncVF$ vs $\ncVBP$ question.

\subsubsection*{Connections to the General $\ncVF$ vs $\ncVBP$ Question}

\autoref{thm:main} gives a separation between $\abcd$ formulas and ABPs.
On the other hand, \autoref{thm:orderForm} shows that if we are given a formula that computes a polynomial that is $\abcd$ with respect to a bucketing system of \emph{small} size, then we can assume that the formula is $\abcd$ without loss of generality.
Unfortunately, the bucketing system with respect to which our \emph{hard polynomial} from \autoref{thm:main} is $\abcd$, is \emph{not small} in size.
Thus, the general question of whether $\ncVBP$ is contained in $\ncVF$ or not still remains open.\\ 

\noindent However, there are two natural questions that arise at this point.
\begin{enumerate}
	\item Can any formula computing an $\abcd$ polynomial be converted to an $\abcd$ formula without much blow-up in size, irrespective of the size of the bucketing system? 
	\item Is there a polynomial $f$ which is $\abcd$ with respect to a bucketing system that has \emph{small} size such that $f$ witnesses a separation between $\abcd$ formulas and ABPs?
\end{enumerate}

\noindent Clearly, a positive answer to either of these questions would imply that $\ncVBP \neq \ncVF$.
In particular, a super-polynomial lower bound against $\abcd$ formulas for a polynomial very similar to the one we used to show our separation would separate $\ncVBP$ and $\ncVF$.

\begin{restatable}{corollary}{FormLBfromOrdFormLB}\label{cor:FormLBfromOrdFormLB}
	Let the polynomial $\lchsym{n}{d}(\vecx)$ be as defined in \autoref{thm:main}.
	An $n^{\omega(1)}$ lower bound against $\abcd$ formulas for $\lchsym{\log n}{n}(\vecx)$ would imply a super-polynomial separation between non-commutative ABPs and formulas.
\end{restatable}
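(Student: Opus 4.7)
The plan is a direct combination of \autoref{thm:main} and \autoref{thm:orderForm}. The arithmetic that makes this work is that when the bucketing system has size $\log n$, the hypothesis ``$m = O(\log s)$'' of \autoref{thm:orderForm} is automatically satisfied by any $s$ that is polynomial in $n$, so every polynomial-size formula can, for free, be replaced by a polynomial-size $\abcd$ formula.

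First I would place $\lchsym{\log n}{n}(\vecx)$ into $\ncVBP$. By \autoref{thm:main} this polynomial is computed by an $\abcd$ ABP of size $O(n \log n)$, which is polynomial in the parameter $n$; the polynomial itself involves only $O(\log^2 n)$ variables and has degree $n$. Hence the family $\{\lchsym{\log n}{n}(\vecx)\}_n$ lies in $\ncVBP$.

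Next I would transfer the hypothesised $\abcd$ formula lower bound to a general formula lower bound via \autoref{thm:orderForm}. Suppose, for contradiction, that $\lchsym{\log n}{n}(\vecx)$ were computable by a non-commutative formula of size $s = \poly(n)$. The relevant bucketing system has size $m = \log n$, and since $\log s = \Theta(\log n)$ we have $m = O(\log s)$. \autoref{thm:orderForm} then produces an $\abcd$ formula of size $\poly(s) = \poly(n)$, contradicting the assumed $n^{\omega(1)}$ lower bound. So any non-commutative formula for $\lchsym{\log n}{n}(\vecx)$ must have size $n^{\omega(1)}$, placing the family outside $\ncVF$.

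Combining the two steps yields a family in $\ncVBP \setminus \ncVF$, which is the desired super-polynomial separation between non-commutative ABPs and formulas. I do not expect any real obstacle here, since \autoref{thm:main} and \autoref{thm:orderForm} do all the work; the only point worth double-checking is that the parameter choice $(m, d) = (\log n, n)$ lies exactly in the regime where \autoref{thm:orderForm} is applicable to formulas of polynomial size, which is precisely the arithmetic above.
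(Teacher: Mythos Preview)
Your proposal is correct and follows the same approach as the paper: invoke \autoref{thm:main} for the ABP upper bound and \autoref{thm:orderForm} to convert a hypothetical small formula into a small $\abcd$ formula. The only point the paper makes slightly more explicit is that $s \geq n^{\Omega(1)}$ (since formula size is at least the ABP complexity), which is what underpins your assertion that $\log s = \Theta(\log n)$ and hence $m = O(\log s)$.
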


In fact our proof technique also shows that a super-polynomial lower bound against \emph{homogeneous} formulas for our hard polynomial would separate $\ncVBP$ and $\ncVF$.

\begin{restatable}{corollary}{FormLBfromHomFormLB}\label{cor:FormLBfromHomFormLB}
	Let $\lchsym{n}{d}$ be as defined in \autoref{thm:main}. 
	An $n^{\omega(1)}$ lower bound against homogeneous formulas for $\lchsym{n}{\log n}(\vecx)$ would result in a super-polynomial separation between ABPs and formulas in the non-commutative setting. 
\end{restatable}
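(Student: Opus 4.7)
My plan is to argue the corollary by contrapositive: assume $\ncVBP \subseteq \ncVF$ and produce a homogeneous non-commutative formula of size $\poly(n)$ for $\lchsym{n}{\log n}(\vecx)$, contradicting the hypothesised $n^{\omega(1)}$ lower bound against homogeneous formulas.

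First, I would instantiate Part~1 of \autoref{thm:main} at $d = \log n$ to obtain an $\abcd$ (hence ordinary non-commutative) ABP of size $O(n \log n)$ for $\lchsym{n}{\log n}(\vecx)$, placing $\lchsym{n}{\log n}$ in $\ncVBP$. Under the contrapositive assumption there is then a non-commutative formula $\form$ of size $s = \poly(n)$ computing it. I would then apply the standard homogenization: for every gate $v$ of $\form$ and every $k \in \{0,1,\ldots,d\}$, introduce a new gate $v_k$ computing the degree-$k$ component of $f_v$, with $v_k = u_k + w_k$ at addition gates and $v_k = \sum_{i+j=k} u_i \cdot w_j$ at multiplication gates. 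The left-to-right ordering of the factors is preserved, so the construction transfers verbatim to the non-commutative setting; reading off the degree-$d$ gate at the root produces a syntactically homogeneous non-commutative formula $\form'$ for $\lchsym{n}{\log n}(\vecx)$.

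The main step I would want to analyse carefully is the size bound on $\form'$. In the pure formula (tree) setting the children's degree-components cannot be shared across the gates $v_0,\ldots,v_d$, so a naive implementation of the construction can duplicate subformulas at each multiplication gate and push the blowup beyond the usual $O(sd^2)$ bound that holds for circuits. The standard remedy is to first apply a non-commutative Brent-style depth reduction to $\form$ (so that its product-depth is small) and then carry out the homogenization level-by-level; combined with the small degree $d = \log n$ and with $s = \poly(n)$, this keeps the overall blowup polynomial in $n$ and gives $|\form'| = \poly(n)$. That contradicts the assumed $n^{\omega(1)}$ lower bound on homogeneous formulas for $\lchsym{n}{\log n}(\vecx)$ and yields $\ncVBP \not\subseteq \ncVF$. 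I expect the careful accounting of the homogenization blowup in the pure formula setting to be the only real obstacle; once that size bound is in hand, the rest of the argument is immediate.
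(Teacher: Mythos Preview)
Your proposal is correct and follows essentially the same route as the paper: the paper invokes \autoref{thm:main} for the ABP upper bound and then applies \autoref{lem:homogenisation} (whose proof is precisely the Brent-style depth reduction followed by level-by-level homogenisation you sketch) to convert any $\poly(n)$-size formula into a $\poly(n)$-size homogeneous one. Your identification of the formula-versus-circuit blowup issue and its resolution via depth reduction is exactly the content of that lemma.
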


 \subsection{Proof Overview}\label{sec:proof-overview}

We now give a proof overview of our main theorems.

\subsubsection*{Separating Abecedarian Formulas and ABPs}

Let us first consider \autoref{thm:main}.

\mainThm*

\noindent A \emph{small} $\abcd$ ABP computing $\lchsym{n}{d}(\vecx)$ is essentially the following.

\vspace{1em}
\begin{tikzpicture}[thick, node distance=3cm]
	\node [circle, draw=black] (start) {};

	\node[circle, draw=black] (l11) at ($(start)+(2,1.5)$) {$s_1$};
	\node (l12) at ($(start)+(2,.75)$) {$\vdots$};
	\node (l10) at ($(start)+(2,0)$) {$\vdots$};
	\node (l13) at ($(start)+(2,-.75)$) {$\vdots$};
	\node[circle, draw=black] (l14) at ($(start)+(2,-1.5)$) {$s_n$};
	\node at ($(start)+(2,-2.5)$) {$0$};

	\node (l21) at ($(start)+(4,1.5)$) {$\cdots$};
	\node (l21) at ($(start)+(4,.75)$) {$\cdots$};
	\node (l23) at ($(start)+(4,-.75)$) {$\cdots$};
	\node (l24) at ($(start)+(4,-1.5)$) {$\cdots$};

	\node (l31) at ($(start)+(6,1.5)$) {$\vdots$};
	\node[circle, draw=black] (l32) at ($(start)+(6,.75)$) {$i$};
	\node (l30) at ($(start)+(6,0)$) {$\vdots$};
	\node (l33) at ($(start)+(6,-.75)$) {$\vdots$};
	\node (l34) at ($(start)+(6,-1.5)$) {$\vdots$};
	\node at ($(start)+(6,-2.5)$) {$k$};

	\node (l41) at ($(start)+(8,1.5)$) {$\vdots$};
	\node[circle, draw=black] (l42) at ($(start)+(8,.75)$) {$i$};
	\node (l40) at ($(start)+(8,0)$) {$\vdots$};
	\node[circle, draw=black] (l43) at ($(start)+(8,-.75)$) {$j$};
	\node (l44) at ($(start)+(8,-1.5)$) {$\vdots$};
	\node at ($(start)+(8,-2.5)$) {$k+1$};

	\node (l51) at ($(start)+(10,1.5)$) {$\cdots$};
	\node (l52) at ($(start)+(10,.75)$) {$\cdots$};
	\node (l53) at ($(start)+(10,-.75)$) {$\cdots$};
	\node (l54) at ($(start)+(10,-1.5)$) {$\cdots$};

	\node[circle, draw=black] (l61) at ($(start)+(12,1.5)$) {$t_1$};
	\node (l62) at ($(start)+(12,.75)$) {$\vdots$};
	\node (l60) at ($(start)+(12,0)$) {$\vdots$};
	\node (l63) at ($(start)+(12,-.75)$) {$\vdots$};
	\node[circle, draw=black] (l64) at ($(start)+(12,-1.5)$) {$t_n$};
	\node at ($(start)+(12,-2.5)$) {$d$};

	\node[circle, draw=black] (end) at ($(start)+(14,0)$) {};

	\draw[->](l32) -- node[above] {$x_{i,i}$} (l42);
	\draw[->](l32) -- node[right] {$x_{i,j}$} (l43);

	\draw[->](start) -- node[above] {$1$} (l11);
	\draw[->](start) -- node[below] {$1$} (l14);

	\draw[->](l61) -- node[above] {$1$} (end);
	\draw[->](l64) -- node[below] {$1$} (end);
\end{tikzpicture}

\vspace{1em}
\noindent For the lower bound, assume that we have been given a \emph{small} $\abcd$ formula computing the polynomial.
We then keep modifying this formula till we get a \emph{small} homogeneous multilinear formula computing the \emph{elementary symmetric polynomial} of degree $n/2$. 
We then use the known lower bound against homogeneous multilinear formulas for this polynomial (shown by {\Hrubes} and Yehudayoff \cite{HY11}), to get a contradiction.

Let us spell out the proof in some more detail.

\begin{description}
	\item[Step 1:] Assume that we are given an $\abcd$ formula  computing $\lchsym{n/2}{\log n}(\vecx)$ of size $O(n^{\epsilon \log \log n})$.
	Since the degree of the polynomial being computed is \emph{small}, we can assume that there is in fact a \emph{homogeneous} $\abcd$ formula computing $\lchsym{n/2}{\log n}(\vecx)$ of size $O(n^{c \cdot \epsilon \log \log n})$ for some constant $c$ independent of $\epsilon$.
	\item[Step 2:] Using the homogeneous $\abcd$ formula from Step 1, we obtain a \emph{structured} homogeneous $\abcd$ formula, of size $O(n^{c \cdot \epsilon \log \log n})$, computing $\lchsym{n/2}{\log n}(\vecx)$.
	\item[Step 3:] We consider the complete homogeneous polynomial over $n$ variables of degeree $d$ 
	\[
		\chsym{n}{d} (\vecx) = \sum_{1 \leq i_1 \leq \ldots \leq i_d \leq n} x_{i_1} \cdots x_{i_d},
	\]
	and show that there is a homogeneous $\abcd$ formula of size $\poly(n)$ that computes $\chsym{n/2}{\log n} (\vecx)$.
	\item[Step 4:] If the formula in Step 2 has size $s$ and that in Step 3 has size $s'$, then we show that there is a homogeneous $\abcd$ formula of size $(s \cdot s')$ computing $\chsym{n/2}{\log^2 n} (\vecx)$.
	\item[Step 5:] Next, we show that Step 4 can be used repeatedly at most $O(\sfrac{\log n}{\log \log n})$ times, to obtain a homogeneous $\abcd$ formula computing $\chsym{n/2}{n/2} (\vecx)$ of size $O(n^{c \cdot \epsilon \log n})$.
	\item[Step 6:] Using the formula obtained in Step 5, we get a homogeneous multilinear formula computing the elementary symmetric polynomial of degree $n/2$, of size $O(n^{c \cdot \epsilon \log n})$.
	\item[Step 7:] Finally, we choose $\epsilon$ in such a way that Step 6 contradicts the theorem in \cite{HY11}. 
\end{description}

\noindent The crucial observation that makes this proof work, is that the polynomial we are working with is structured enough for us to be able to amplify its degree in a systematic way (without blowing up the size by much).
This is the $4^{\text{th}}$ step in the description above.

Apart from that, the entire proof essentially boils down to the fact that when formulas are computing low degree polynomials, there are some additional tricks available to make them more structured.
A complete proof of \autoref{thm:main} can be found in \autoref{sec:main}.

We now elaborate a little on the first step, since the observations made to prove this step are quite general and possibly useful in various settings.
These statements are known to be true in the commutative setting and their proofs in the non-commutative setting are fairly similar to the ones for their commutative counterparts.
We state them here nevertheless, since to the best of our knowledge, they have not been stated formally before for the non-commutative setting.

\paragraph*{Homogenising Non-Commutative Formulas computing Low Degree Polynomials}

Raz \cite{R13} had shown that if there is a formula computing a homogeneous polynomial of \emph{low} degree in the commutative world, then it can be assumed without loss of generality that the formula is homogeneous.
We show that this statement is true even in the non-commutative setting. 

\begin{restatable}[Homogenising Non-Commutative Formulas computing Low Degree Polynomials]{lemma}{homogenisation}\label{lem:homogenisation}
	Suppose $f$ is a non-commutative homogeneous polynomial of degree $d = O(\log s)$, where $s$ is its ABP complexity. 
	Also, suppose $\form$ is a fan-in $2$ formula of size $s'$ that computes $f$.
	Then there is a homogeneous formula $\form'$ computing $f$, that has size $\poly(s')$ and whose multiplication gates have fan-in $2$.

	Further, if $\form$ was $\abcd$ with respect to some bucketing system, then $\form'$ is also $\abcd$ with respect to the same bucketing system.
\end{restatable}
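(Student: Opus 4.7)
The plan is to apply the standard Raz-style homogenization to $\form$, tracking homogeneous components at every gate while preserving both fan-in $2$ multiplication and the $\abcd$ labels.

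Concretely, I would replace each gate $v$ of $\form$ by $d+1$ new gates $v^{(0)}, v^{(1)}, \ldots, v^{(d)}$, where $v^{(i)}$ computes the degree-$i$ homogeneous component of the polynomial at $v$. At an addition gate $v = u + w$, I set $v^{(i)} = u^{(i)} + w^{(i)}$; at a multiplication gate $v = u \cdot w$, I set $v^{(i)} = \sum_{j=0}^{i} u^{(j)} \cdot w^{(i-j)}$, keeping the left-to-right order of the factors (which matters here, since we are non-commutative). At leaves, the homogeneous parts are immediate. The root of $\form'$ is then the degree-$d$ component, which equals $f$ because $f$ is homogeneous of degree $d$. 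Every new multiplication remains binary, so the fan-in $2$ condition is maintained.

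For the size bound, I would first depth-reduce $\form$ to depth $O(\log s')$ and size $\poly(s')$ via the standard Brent/VSBR-style rebalancing; this adapts to the non-commutative setting since it is a purely structural manipulation of the formula tree and produces polynomial identities that respect left-right order. Homogenizing the depth-reduced formula then introduces a factor of at most $d^{O(\log d)}$ in size. Using the hypothesis $d = O(\log s)$ and the fact that $s \le s'$ in the non-commutative setting (Nisan's rank-based lower bound shows that the ABP complexity lower-bounds the formula complexity), we get $d^{O(\log d)} = (\log s)^{O(\log \log s)} = s^{o(1)} = \poly(s')$, so the final formula has size $\poly(s')$.

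For the $\abcd$ property, I would argue inductively that if a gate $v$ of the (depth-reduced) formula carries label $(a,b)$, then each $v^{(i)}$ in $\form'$ also carries label $(a,b)$: the $i$-th homogeneous part of an $\abcd$ polynomial is $\abcd$ with the same label, and the multiplication rule combines labels correctly (a product of gates with labels $(a,c)$ and $(c,b)$ has label $(a,b)$). The main obstacle is to check that the depth-reduction step itself preserves $\abcd$-ness; since that step only rewrites the formula using polynomial identities without ever permuting the variables inside monomials, each rewritten subformula computes an $\abcd$ polynomial with an appropriately inherited label, and so $\form'$ ends up $\abcd$ with respect to the original bucketing system.
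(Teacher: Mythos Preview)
Your approach---depth-reduce first, then split every gate $v$ into its homogeneous components $v^{(0)},\ldots,v^{(d)}$, carrying the $\abcd$ label of $v$ onto each $v^{(i)}$---is exactly what the paper does. The one substantive gap is your size analysis. The blow-up from homogenising a depth-$r$ formula and unravelling back to a formula is \emph{not} $d^{O(\log d)}$; it is governed by the number of monotone degree sequences $i_1 \le i_2 \le \cdots \le i_\ell = d$ along a root-to-leaf path of length $\ell \le O(r)$, which is $\binom{O(r)+d}{d}$. Your bound depends only on $d$ and ignores the depth entirely, and that cannot be right: if $s'$ (and hence $r = \Theta(\log s')$) is much larger than $s$, the number of such sequences can be of order $(r/d)^d$, far exceeding $d^{O(\log d)}$.

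The paper's path-counting gives $\binom{2r+d}{d} \le 2^{2r+d}$, and since $r = O(\log s')$ and $d = O(\log s) \le O(\log s')$ (using $s \le s'$, which you correctly attribute to Nisan), this is $2^{O(\log s')} = \poly(s')$. So the conclusion survives once you plug in the correct bound, but the intermediate claim $d^{O(\log d)}$ is unjustified and in general false.
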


The only thing that needs to be checked for Raz's proof to work in this setting is whether non-commutative formulas can be depth-reduced to log-depth.
We show that infact they can be.

\paragraph*{Depth Reduction for Non-Commutative Formulas}

Brent \cite{B74} had shown that if there is a formula of size $s$ computing a commutative polynomial $f$, then there is a formula of depth $O(\log s)$ and size $\poly(s)$ that computes the same polynomial. 
We show that this is also true in the non-commutative setting.
The proof is almost exactly along the same lines as the one by Brent \cite{B74}, just analysed carefully.

\begin{restatable}[Depth Reduction of Non-Commutative Formulas]{lemma}{depthReduction}\label{lem:depth-reduction}
	If there is a fan-in $2$ formula $\form$ of size $s$ computing a non-commutative polynomial $f$, then there is a fan-in $2$ formula $\form'$ of size $\poly(s)$ and depth $O(\log(s))$ computing $f$.

	Further if $\form$ is homogeneous, $\form'$ is also homogeneous. Similarly, if $\form$ is $\abcd$ with respect to some bucketing system, then $\form'$ is also $\abcd$ with respect to the same bucketing system.
\end{restatable}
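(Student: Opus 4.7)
The plan is to follow Brent's \cite{B74} classical depth-reduction argument with one substantive change. In the commutative setting, carving out a gate $v$ computing $g$ leaves the rest of $\form$ as an affine function $\alpha \cdot g + \beta$ in the ``hole''; in the non-commutative setting this becomes $L \cdot g \cdot R + h$ for polynomials $L, R, h$ determined by the surrounding portion of $\form$. With this single substitution the argument is structurally identical.

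Formally, I would induct on $s = \size(\form)$. For constant $s$ the claim is immediate. Otherwise, I first locate a balanced separator: a gate $v$ of $\form$ whose sub-tree $T_v$ has size $s_v$ with $s/3 \le s_v \le 2s/3$; such a $v$ exists in any fan-in $2$ tree of size at least $3$. By a secondary induction on the depth of $v$ in $\form$, I establish the context decomposition $f = L \cdot g \cdot R + h$, where $L$ and $R$ are products of the left- (respectively, right-) siblings of product gates along the root-to-$v$ path, and $h$ collects the contributions of plus-siblings along that path. Since every sibling sub-formula harvested this way is disjoint from $T_v$ and from every other such sibling, the formulas for $L, R, h$ are built from pairwise disjoint sub-formulas of $\form \setminus T_v$, and each therefore has size at most $O(s - s_v) = O(2s/3)$.

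Applying the induction hypothesis to the four sub-formulas (for $g, L, R, h$), each of size at most $2s/3$, yields depth-reduced formulas $\form'_g, \form'_L, \form'_R, \form'_h$ of depth $O(\log s)$ and size $\poly(s)$. Setting $\form' := (\form'_L \cdot \form'_g) \cdot \form'_R + \form'_h$, split into fan-in $2$ gates, the depth recurrence $D(s) \le D(2s/3) + O(1)$ solves to $O(\log s)$, and the size recurrence $S(s) \le 4 \cdot S(2s/3) + O(1)$ solves to $\poly(s)$. For the preservation of homogeneity, every gate of $\form$ computes a homogeneous polynomial, so $g$ and the sibling polynomials are homogeneous, and $L, R, h$ then inherit homogeneity with degrees summing consistently to $\deg(f)$; the assembled $\form'$ is therefore homogeneous. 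For the preservation of abecedarity, the $\abcd$ label of each gate of $\form$ induces an $\abcd$ label on the corresponding sibling sub-formula, and by construction the labels of $L, g, R, h$ align so that the product $L \cdot g \cdot R$ and its sum with $h$ are both $\abcd$ with respect to the same bucketing system.

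The step I expect to require the most care is executing the context-decomposition cleanly: one must verify, by induction on path length, that the sub-formulas contributing to $L, R, h$ are pairwise disjoint and all disjoint from $T_v$, so that the bound $|L| + |R| + |h| = O(s - s_v)$ holds and the recurrence closes. This reduces to a case analysis for the parent gate along the path (plus gate, left-input product, or right-input product): in each case exactly one of $L, R, h$ gets updated by multiplying or adding an existing disjoint sub-formula of $\form$, and this inductive bookkeeping is what makes both the size and the depth recurrences go through while simultaneously preserving homogeneity and abecedarity.
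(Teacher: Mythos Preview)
Your proposal is correct and follows essentially the same route as the paper: locate a balanced separator $v$ with subtree size in $[s/3,2s/3]$, decompose $f = L\cdot g\cdot R + h$ (the paper's $L,\form_1,R,\form_2$), recurse on the four pieces, and solve the identical recurrences $D(s)\le D(2s/3)+O(1)$ and $S(s)\le 4\,S(2s/3)+O(1)$; the preservation of homogeneity and abecedarity is argued the same way.

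One small point to tighten when you write it out: your claim that $L,R,h$ are built from \emph{pairwise disjoint} sub-formulas is not quite right for $h$. In the inductive update at a plus gate with sibling $w$, the contribution to $h$ is $L_i\cdot w\cdot R_i$ (the current context multiplied in), not just $w$, so a naive inductive build of $h$ reuses pieces already in $L$ and $R$ and the size could blow up. The standard fix---and what the paper does---is simply to take $h$ to be the formula $\form$ with the subtree at $v$ replaced by $0$; this has size at most $s-s_v+1\le 2s/3+O(1)$ and trivially satisfies $f=L\cdot g\cdot R + h$. With that adjustment everything you wrote goes through verbatim.
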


The statement is slightly surprising because Nisan \cite{N91} had shown that such a statement is false for non-commutative circuits\footnote{Unlike commutative circuits \cite{VSBR83}.}.
\autoref{lem:depth-reduction} says that this is not the case with formulas.

This observation is not only crucial for proving some of our statements, it also answers a question posed by Nisan \cite{N91}.
For a polynomial $f$, let $D(f)$ denote the minimum depth among all fan-in $2$ circuits computing $f$, $B(f)$ the size of the smallest ABP computing $f$, and $F(f)$ the size of the smallest fan-in $2$ formula computing $f$. 
Nisan \cite{N91} had shown that if $f$ has degree $d$, then
\[
	D(f) \leq O(\log B(f) \cdot \log d) \quad \text{ and } \quad F(f) \leq 2^{D(f)}.
\]
He had then asked whether $D(f) \leq O(\log F(f))$ or not. 
\autoref{lem:depth-reduction} clearly implies that, in particular, the answer to this question is YES.

\subsubsection*{Converting Formulas into Abecedarian Formulas}

Next we go over the proof idea of \autoref{thm:orderForm}.

\orderForm*

We prove this statement by first converting the given formula $\form$ into an $\abcd$ circuit $\ckt$, and then unravelling it to get an $\abcd$ formula $\form'$ computing the same polynomial.

The first step is fairly straightforward. 
The proof is along the same lines as that for homogenising circuits.
The only difference is that we keep track of bucket indices of the variables on either ends of the monomials being computed at the gates, instead of their degrees. 

In the second step, we convert $\ckt$ into a formula $\form'$.
In order to do that, we need to recompute vertices every time it is reused.
Thus, to give an upper bound on the size of $\form'$, we need to find an upper bound on the number of distinct paths from any vertex in $\ckt$ to the root.
This analysis is done in a way similar to the one by Raz \cite{R13} in his proof of the fact that formulas computing low degree polynomials can be homogenised without much blow-up in size.
The requirement of the size of the bucketing system being small also arises because of this analysis.

The only additional point that needs to be checked for the proof to go through is that similar to the commutative setting, non-commutative formulas can be depth reduced as well (\autoref{lem:depth-reduction}).
A complete proof of \autoref{thm:orderForm} can be found in \autoref{sec:orderForms}.

\subsection{Other Results: A Complete View of the Abecedarian World}

We now go over some other results that helps in completing the view of the $\abcd$ world.

As mentioned earlier, {\Hrubes} {\etal} \cite{HWY11} had defined ordered circuits, a model naturally suited to compute ordered polynomials. 
They had then gone on to show that without loss of generality, any circuit computing an ordered polynomial can be assumed to be ordered\footnote{Theorem 7.1 in \cite{HWY11}.}. 
We show that even in the $\abcd$ setting, such a statement is true.

\begin{restatable}[Converting Circuits into Abecedarian Circuits]{observation}{orderCkt}\label{obs:orderCkt}
	Let $f$ be an $\abcd$ polynomial with respect to a bucketing system of size $m$, and $\ckt$ be a circuit of size $s$ computing $f$. 
	Then there is an $\abcd$ circuit $\ckt'$ computing $f$ of size $O(m^3s)$.
\end{restatable}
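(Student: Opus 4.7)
The plan is to adapt the standard gate-replacement strategy used in homogenization of circuits, tracking for each gate not the degree of the polynomial it computes but the bucket indices of the first and last variable of its monomials. For each gate $v$ of $\ckt$ and every pair $(a,b)$ with $1 \leq a \leq b \leq m+1$, I will introduce a gate $v_{a,b}$ intended to compute $\pi_{a,b}(f_v) := f_v[a,b)$, i.e.\ the sub-polynomial obtained from $f_v$ by keeping only the abecedarian monomials that start in bucket $X_a$ and end in some bucket $X_c$ with $a \leq c < b$ (the case $a=b$ giving the constant term). Each such $v_{a,b}$ is then $\abcd$-labellable by $(a,b)$. At the root, the required polynomial equals $\sum_{a \leq b} \pi_{a,b}(f)$ since $f$ is assumed abecedarian.

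The key technical point is a closed-form identity for $\pi_{a,b}(gh)$ in terms of gates that are themselves $\abcd$-labellable. Classifying each abecedarian monomial of $gh$ by the bucket index $c$ at which the part coming from $g$ ends, one obtains
\[
	\pi_{a,b}(gh) \;=\; g^{\mathrm{const}} \cdot \pi_{a,b}(h) \;+\; \pi_{a,b}(g) \cdot h^{\mathrm{const}} \;+\; \sum_{c=a}^{b-1} \pi_{a,c+1}(g) \cdot \pi_{c,b}(h),
\]
which I will verify by expanding both sides as sums over triples $(c, c', d)$ with $a \leq c \leq c' \leq d \leq b-1$ (the ending bucket of the $g$-part, the starting bucket of the $h$-part, and the ending bucket of the $h$-part). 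Crucially, every polynomial on the right is one of the already-defined $(a',b')$-labelled gates, so no ill-labelled intermediate gate is ever introduced.

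With this identity in place, each multiplication gate contributes $O(m)$ new operations per pair $(a,b)$, and there are $O(m^2)$ such pairs, giving $O(m^3)$ new gates per original multiplication. Addition gates $v=g+h$ give $v_{a,b}=g_{a,b}+h_{a,b}$, contributing $O(m^2)$ gates each, and leaves contribute $O(m^2)$ trivial gates. Summed over all $s$ gates of $\ckt$, the resulting circuit $\ckt'$ has size $O(m^3 s)$, and every new gate is labelled by construction, so $\ckt'$ is $\abcd$.

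The main obstacle that the identity is designed to overcome is the temptation to use partial sums of the form $\sum_{c' \geq c} \pi_{c',b}(h)$, which mix several starting buckets and hence are \emph{not} $\abcd$-labellable; the identity sidesteps this by replacing those with prefix sums on the $g$-side, which have the legal form $\pi_{a,c+1}(g)$. Once the identity and its $\abcd$ labelling are verified, the remaining work---inductively wiring the $v_{a,b}$'s and summing them at the root to recover $f$---is straightforward bookkeeping.
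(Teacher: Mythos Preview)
Your proposal is essentially the paper's own proof: the same gate-replacement strategy, the same multiplication identity (verbatim the paper's formula for $(v,[a,b))$), and the same $O(m^3 s)$ size count. One small slip to fix: the output should be $\sum_{a=1}^{m+1} \pi_{a,m+1}(f)$ (one summand per possible starting bucket, with $a=m+1$ supplying the constant term), not $\sum_{a \leq b} \pi_{a,b}(f)$, which overcounts each nonconstant monomial once for every $b$ exceeding its last bucket; with that correction your construction coincides with the paper's.
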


\noindent What this implies is that an $n^{\omega(1)}$ lower bound against $\abcd$ circuits for any explicit polynomial that is $\abcd$ would result in a super-polynomial lower bound against general non-commutative circuits.
We also show that an analogous statement is true even for $\abcd$ ABPs.

\begin{restatable}[Converting ABPs into Abecedarian ABPs]{observation}{orderABP}\label{obs:orderABP}
	Suppose $f$ is an $\abcd$ polynomial with respect to a bucketing system of size $m$. If there is an ABP $\abp$ of size $s$ computing it, then there is an $\abcd$ ABP $\abp'$ computing it of size $O(ms)$.
\end{restatable}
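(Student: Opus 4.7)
The plan is to split each vertex $v$ of $\abp$ into $m+1$ copies $v^{(1)}, \ldots, v^{(m+1)}$, labelled by their bucket indices, and to route the edges of $\abp$ through these copies so that every source-to-sink walk in the new ABP reads variables whose bucket indices form a non-decreasing sequence. Since $f$ is $\abcd$, any non-$\abcd$ monomial produced by summing over walks in $\abp$ must already cancel to zero, so restricting $\abp'$ to realise only $\abcd$ walks recovers exactly $f$ with no loss.

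Concretely, for each edge $(u, v)$ of $\abp$ labelled by a variable $x \in X_k$, I would introduce an edge $u^{(k)} \to v^{(k)}$ in $\abp'$ with label $x$; call these the \emph{variable edges}. For each vertex $v$ of $\abp$ and each $a \in [m]$, I would introduce a \emph{transition edge} $v^{(a)} \to v^{(a+1)}$ labelled by the constant $1$. Let $\sigma$ and $\tau$ denote the source and sink of $\abp$, and take $\sigma^{(1)}$ as the source of $\abp'$ and $\tau^{(m+1)}$ as its sink. This produces an ABP on $(m+1) \cdot |V(\abp)| = O(ms)$ vertices, with $m \cdot |V(\abp)|$ transition edges plus $|E(\abp)|$ variable edges, giving overall size $O(ms)$ under the standard convention.

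For correctness, every walk from $\sigma^{(1)}$ to $\tau^{(m+1)}$ alternates variable edges with transitions: variable edges stay within a single bucket and transitions only advance the bucket counter, so the variables read along any walk have non-decreasing bucket indices and produce an $\abcd$ monomial. Conversely, every $\abcd$ walk in $\abp$ reading variables from buckets $X_{k_1}, \ldots, X_{k_d}$ with $k_1 \leq \cdots \leq k_d$ lifts to a unique walk in $\abp'$ by padding the front, the gaps between consecutive variable reads, and the tail with the appropriate transition edges. Therefore $\abp'$ computes the $\abcd$-projection of the polynomial computed by $\abp$, and since that polynomial is the $\abcd$ polynomial $f$, this projection equals $f$. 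Assigning the label $\ell(v^{(a)}) = a$ then makes $\abp'$ an $\abcd$ ABP, since any walk from $u^{(a)}$ to $v^{(b)}$ uses only variable edges whose labels lie in $X_a \cup \cdots \cup X_b$, so the polynomial computed has the required $[a, b+1)$-structure.

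The main care needed is to ensure that the ``first variable'' convention of the $\abcd$ ABP definition is respected at every pair of labelled vertices, specifically that a walk starting at $u^{(a)}$ cannot silently skip past bucket $a$ via transitions before reading any variable. This can be enforced by a small refinement of the construction — for instance, splitting each $v^{(a)}$ into separate ``pre-read'' and ``post-read'' copies connected by a single transition, so that the label of each copy correctly reflects whether the next or the last read sits in bucket $X_a$ — without affecting the $O(ms)$ size bound.
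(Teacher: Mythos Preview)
Your approach is essentially the paper's: split each vertex of $\abp$ into $O(m)$ bucket-indexed copies and route edges so that only walks with non-decreasing bucket indices survive, giving an $\abcd$ ABP of size $O(ms)$. The paper realises the bucket advance differently---for each original edge it adds direct edges $(u,a)\to(v,b)$ for all $a\le b$ with the linear form restricted to buckets $X_a,\ldots,X_b$, rather than your weight-$1$ transition edges $v^{(a)}\to v^{(a+1)}$---but the idea and the bound are the same; your unique-lift argument is arguably cleaner, and you are more explicit than the paper about the ``first variable in $X_a$'' clause of the definition, though you should also say a word about edges labelled by general affine forms (or constants) rather than single variables.
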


\noindent Next, let us define various classes of $\abcd$ polynomials. 

Let $\abcncVP$ denote the class of $\abcd$ polynomials that can be computed by $\poly$-sized $\abcd$ circuits. 
Similarly let $\abcncVBP$ and $\abcncVF$ denote the classes of $\abcd$ polynomials that can be computed by $\poly$-sized $\abcd$ ABPs and $\abcd$ formulas respectively. 
We first note that the logical inclusions that should hold, do hold.

\begin{restatable}[The Usual Inclusions]{observation}{usualInclusions}\label{obs:usualInclusions}
	Let $\abcncVP$, $\abcncVBP$ and $\abcncVF$ denote the classes of $\abcd$ polynomials over $n$ variables that can be computed by $\poly(n)$ sized $\abcd$ circuits, $\abcd$ ABPs and $\abcd$ formulas respectively.
	Then,
	\[
		\abcncVF \subseteq \abcncVBP \subseteq \abcncVP. \qedhere	
	\]
\end{restatable}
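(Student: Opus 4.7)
The plan is to prove both inclusions by exhibiting the standard non-commutative model transformations---formula to ABP and ABP to circuit---and verifying that the abecedarian labels transfer along with each step. Both transformations incur only a linear blowup in size, so the class containments follow as long as the transformed model inherits a consistent abecedarian labelling.

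For $\abcncVBP \subseteq \abcncVP$, I would take an abecedarian ABP $\abp$ with source $s_0$, sink $t_0$, and vertex labels $\ell(\cdot)$, and build a circuit that, for each vertex $v$, has an addition gate $g_v$ computing the source-to-$v$ polynomial of $\abp$. Concretely, set $g_{s_0} = 1$ and, for every other $v$, $g_v = \sum_{(u,v) \in E(\abp)} g_u \cdot w(u,v)$, where $w(u,v)$ is the edge weight; the output is $g_{t_0}$ and the total size is $O(|\abp|)$. I label each $g_v$, as well as each intermediate multiplication gate $g_u \cdot w(u,v)$, with the pair $(\ell(s_0),\,\ell(v)+1)$. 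The abecedarian property of $\abp$ ensures that the polynomial at every such gate lies in the required $[\,\cdot,\,\cdot\,)$-subspace, while the fact that $\ell(u) \le \ell(v)$ along every edge guarantees that each multiplication gate's label is compatible with those of its two children.

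For $\abcncVF \subseteq \abcncVBP$, I would apply the standard recursive formula-to-ABP conversion~\cite{N91} by structural induction on the abecedarian formula $\form$. A leaf computing $x \in X_k$ (with formula label $(k, k+1)$) becomes a single-edge ABP whose source and sink are both labelled $k$. An addition gate $v = u + w$ with label $(a, b)$ forces both children to carry the same label, and I form the ABP for $v$ by taking the disjoint union of the two child ABPs and then identifying their sources and separately identifying their sinks. A multiplication gate $v = u \cdot w$ with label $(a, b)$ has, by the abecedarian axiom on formulas, child labels $(a, c)$ and $(c, b)$ for some bucket index $c$; I concatenate the child ABPs by merging the sink of the left (labelled $c-1$) with the source of the right (labelled $c$), and relabelling the merged vertex $c$.

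The main technical care goes into the bucket-index bookkeeping, particularly at multiplication gates in the formula-to-ABP direction, where the sink of the left child's ABP naturally carries label $c-1$ while the source of the right child's ABP carries label $c$. This mismatch is absorbed by the relabelling above, which is legitimate because any monomial of the left child's source-to-sink polynomial, being an element of $f_u[a, c)$, trivially lies in $f_u[a, c+1)$ as well; so relabelling the merged vertex $c$ preserves all required sub-polynomial identities at every pair of vertices in the resulting ABP. Aside from this check, the usual transformations go through unchanged, and the inclusions hold with only a constant-factor blowup.
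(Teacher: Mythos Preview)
Your argument is sound in outline, but it takes a genuinely different route from the paper's. The paper does not track abecedarian labels through the standard transformations at all. Instead it argues indirectly: if $f \in \abcncVF$ then in particular $f \in \ncVF \subseteq \ncVBP$, and since $f$ is abecedarian, \autoref{obs:orderABP} converts the resulting (unstructured) ABP back into an abecedarian one with an $O(m)$ blowup; the second inclusion goes the same way via \autoref{obs:orderCkt}. This is a two-line proof that reuses the conversion results already established, at the price of a larger (though still polynomial) size overhead.

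Your direct construction buys a tighter size bound, but it needs a little more care to match the paper's definitions than you indicate. Abecedarian ABPs here are multi-input, multi-output, and abecedarian formulas are collections $\{\form_i\}_{i\in[m]}$ rather than single rooted trees, so both of your transformations must be run componentwise. More to the point, the multiplication rule in \autoref{def:ordForm} is not simply ``children labelled $[a,c)$ and $[c,b)$'': the three cases are $I_{v_1}=[a,b),\,I_{v_2}=[b,b)$; or $I_{v_1}=[a,a),\,I_{v_2}=[a,b)$; or $I_{v_1}=[a,c+1),\,I_{v_2}=[c,b)$. Your merge-and-relabel step addresses the third case, but the first two (one child computing a constant) and constant leaves in general require separate handling in the ABP you build. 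None of this is difficult, but the bookkeeping is a bit heavier than your sketch suggests, which is presumably why the paper opted for the indirect route.
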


\noindent We also observe that if a degree $d$ polynomial has an $\abcd$ ABP of size $s$, then it has an $\abcd$ formula of size $O(s^{\log d})$ via the usual divide-and-conquer algorithm. 

\begin{restatable}[Converting Abecedarian ABPs into Abecedarian Formulas]{observation}{ABPtoForm}\label{obs:ABPtoForm}
	Suppose $f$ is an $\abcd$ polynomial of degree $d$. If there is an $\abcd$ ABP $\abp$ of size $s$ computing it, then there is an $\abcd$ formula $\form$ computing $f$ of size $O(s^{\log d})$.
\end{restatable}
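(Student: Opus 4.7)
The proof follows the standard divide-and-conquer conversion from ABPs to formulas, and the main task is to verify that the abecedarian structure is preserved by the recursion.

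First, I would put the $\abcd$ ABP $\abp$ in a convenient normal form. Since $f$ has degree $d$, we may assume $\abp$ is layered with $d+1$ layers of vertices, numbered $0, 1, \ldots, d$, where every edge goes between consecutive layers; this standard layering increases the size by at most a factor of $d$, and crucially it does not affect the bucket-labelling of vertices, so the result is still $\abcd$. Let $V_k$ denote the set of vertices at layer $k$, so $|V_k| \leq s$ for each $k$. For any two vertices $u \in V_i$ and $v \in V_j$ with $i \leq j$, let $f_{u,v}$ denote the polynomial computed between $u$ and $v$ along paths of length $j-i$ in $\abp$; by the $\abcd$ property of $\abp$, if $u$ carries bucket label $a$ and $v$ carries bucket label $b$, then $f_{u,v} = f_{u,v}[a, b+1)$.

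Next, I would build the formula $\form$ recursively. For each pair of vertices $(u, v)$ with $u \in V_i$, $v \in V_j$, I construct a formula $\form_{u,v}$ computing $f_{u,v}$ by the splitting identity
\[
    f_{u,v} \;=\; \sum_{w \in V_{\lfloor (i+j)/2 \rfloor}} f_{u,w} \cdot f_{w,v},
\]
where each of the two factors is computed recursively by a formula of half the ``length''. Writing $T(\ell)$ for the maximum formula size obtained for any $f_{u,v}$ with $j - i = \ell$, this gives the recurrence $T(\ell) \leq 2 s \cdot T(\ell / 2)$ with base case $T(1) = O(1)$ (a single edge label). Unfolding the recursion over $\log d$ levels yields $T(d) \leq (2s)^{\log d} = O(s^{\log d})$, and the formula $\form$ for $f$ is obtained by taking the appropriate sum over source-sink pairs weighted by the source/sink coefficients of $\abp$.

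Finally, I would verify that $\form$ is itself $\abcd$. Each multiplication gate in $\form$ has the form $f_{u,w} \cdot f_{w,v}$, where $u, w, v$ have bucket labels $a, c, b$ respectively with $a \leq c \leq b$. Inductively, the child computing $f_{u,w}$ carries the label pair $(a, c+1)$ and the child computing $f_{w,v}$ carries $(c, b+1)$. Since a monomial from $f_{u,w}[a, c+1)$ ends with a variable in $X_a \cup \cdots \cup X_c$ and a monomial from $f_{w,v}[c, b+1)$ begins with a variable in $X_c$, their concatenation is abecedarian and lies in $f[a, b+1)$; hence the product gate is consistently labelled with $(a, b+1)$. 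Every addition gate in the splitting identity combines terms that all share the same bucket range $(a, b+1)$, so the $\abcd$ property is preserved there as well.

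I do not expect a serious obstacle. The one piece of care needed is the layering step and the matching bookkeeping of bucket labels across the divide-and-conquer (in particular, that the ``overlap'' label $c$ at the splitting vertex is absorbed correctly into both sub-factor ranges), but since the $\abcd$ ranges $[a, c+1)$ and $[c, b+1)$ share only the boundary bucket $X_c$ and concatenation through $X_c$ preserves the non-decreasing property, everything goes through cleanly.
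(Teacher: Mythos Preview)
Your proposal is correct and follows exactly the same approach as the paper: apply the standard divide-and-conquer conversion and observe that every gate in the resulting formula computes a polynomial between two vertices of the ABP, so the bucket labels inherited from those vertices make the formula $\abcd$. The paper's own proof is a two-sentence sketch to precisely this effect; your write-up simply fills in the layering, the size recurrence, and the label-compatibility check at product gates, all of which are handled correctly.
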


\noindent What \autoref{thm:main} essentially shows is that the blow-up observed in \autoref{obs:ABPtoForm} is tight.
Finally, it is not hard to see that Nisan's proof can be modified to give an exponential separation between $\abcd$ ABPs and $\abcd$ circuits. 




\subsubsection*{General Formula Lower Bound from Homogeneous Formula Lower Bound}

We end by showing that homogeneous formula lower bounds for some well-studied polynomials would lead to separating $\ncVF$ and $\ncVBP$.
These statements are corollaries of \autoref{lem:homogenisation}.

\begin{restatable}{corollary}{FormLBfromHomDetLB}\label{cor:FormLBfromHomDetLB}
	A $2^{\omega(n)}$ lower bound against homogeneous formulas computing the $n \times n$ determinant polynomial, $\Det{n}(\vecx)$, implies $\ncVF \neq \ncVBP$.
\end{restatable}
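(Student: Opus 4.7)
The plan is to prove the contrapositive: assume $\ncVF = \ncVBP$ and produce a homogeneous non-commutative formula for $\Det{n}(\vecx)$ of size $2^{O(n)}$, which will contradict the hypothesised $2^{\omega(n)}$ lower bound. The argument is really just a chaining of Nisan's ABP upper bound for $\Det{n}$ with \autoref{lem:homogenisation}, bridged by a short padding step.

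The starting point is Nisan's observation that $\Det{n}$, although famously hard for non-commutative formulas, admits a non-commutative ABP of size $s_n = 2^{O(n)}$, built by a subset-DP whose layers are indexed by the sets of already-used columns $S \subseteq [n]$. To lift the class-level statement $\ncVF = \ncVBP$ to a quantitative formula-size bound at this exponential scale, we would pad: for every large $N$ set $n = \lfloor (\log N)/2 \rfloor$ and consider the $N$-variate family
\[
	h_N(\vecx, z_1, \ldots, z_{N-n^2}) \;=\; \Det{n}(\vecx) \;+\; z_1 + z_2 + \cdots + z_{N-n^2},
\]
which has an ABP of size $s_n + N = O(N)$ and therefore lies in $\ncVBP$. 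Under the working assumption $h_N \in \ncVF$, so there is a non-commutative formula for $h_N$ of size $\poly(N)$; setting every $z_i$ to $0$ extracts a (not necessarily homogeneous) formula for $\Det{n}$ of size $\poly(N) = 2^{O(n)}$.

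The final move is to invoke \autoref{lem:homogenisation}. The polynomial $\Det{n}$ is homogeneous of degree $d = n$ with ABP complexity $s_n = 2^{\Theta(n)}$, so $d = O(\log s_n)$, which is exactly the regime the lemma is designed for. Feeding in the size $s' = 2^{O(n)}$ formula from the previous paragraph yields a homogeneous non-commutative formula for $\Det{n}$ of size $\poly(s') = 2^{O(n)}$, contradicting the assumed $2^{\omega(n)}$ homogeneous lower bound. The only place that needs genuine care is the padding step, whose job is precisely to convert the qualitative equality $\ncVF = \ncVBP$ into a usable quantitative ABP-to-formula conversion at size $2^{O(n)}$; once that is in hand, every remaining ingredient is already in the paper.
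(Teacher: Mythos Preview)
Your proof is correct and follows the same core approach as the paper: combine Nisan's $2^{O(n)}$ ABP upper bound for $\Det{n}$ with \autoref{lem:homogenisation} (whose degree hypothesis $d=O(\log s)$ is met because $d=n$ and $s=2^{\Theta(n)}$). The paper's three-line proof leaves the passage from the class-level equality $\ncVF=\ncVBP$ to a $2^{O(n)}$-size formula for $\Det{n}$ implicit; your explicit padding construction is exactly the standard way to justify that step, so your write-up is in fact more complete than the paper's.
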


\begin{remark}
	Nisan \cite{N91} had shown that the ABP complexity of $\Det{n}(\vecx)$ and $\Perm{n}(\vecx)$ is $2^{\Theta(n)}$. 
	In the case of $\Perm{n}(\vecx)$, we also have a $2^{O(n)}$ upper bound due to Ryser \cite{R63}.
	However, to the best of our knowledge, there is no formula computing $\Det{n}(\vecx)$ of size $n^{o(n)}$. 
	Hence there is a possibility that a $2^{\omega(n)}$ lower bound can be shown against formulas for this polynomial.
\end{remark}

\begin{restatable}{corollary}{FormLBfromHomImmLB}\label{cor:FormLBfromHomImmLB}
	An $n^{\omega(1)}$ lower bound against homogeneous formulas computing the $n$-variate iterated matrix multiplication polynomial of degree $\log n$, $\imm{n}{\log n}(\vecx)$, implies a super-polynomial separation between ABPs and formulas in the non-commutative setting.
\end{restatable}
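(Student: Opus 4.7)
The plan is to prove the contrapositive using \autoref{lem:homogenisation}. That is, assuming $\ncVF = \ncVBP$, I will exhibit a polynomial-size \emph{homogeneous} formula for $\imm{n}{\log n}(\vecx)$, contradicting the hypothesized $n^{\omega(1)}$ lower bound against such formulas.

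First, I would observe that $\imm{n}{\log n}(\vecx)$ lies in $\ncVBP$: the standard layered ABP for iterated multiplication of $\log n$ many $n \times n$ matrices has $O(n \log n)$ vertices and $O(n^2 \log n)$ edges, so its ABP complexity $s = \poly(n)$. Under the assumption $\ncVF = \ncVBP$, there is then a fan-in $2$ non-commutative formula $\form$ of size $s' = \poly(n)$ computing $\imm{n}{\log n}(\vecx)$.

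Next, I would verify the hypotheses of \autoref{lem:homogenisation}. The polynomial is homogeneous of degree $d = \log n$, and its ABP complexity $s = \poly(n)$ satisfies $d = \log n = O(\log s)$. Applying the lemma to $\form$ produces a homogeneous fan-in $2$ formula $\form'$ computing $\imm{n}{\log n}(\vecx)$ of size $\poly(s') = \poly(n)$. This directly contradicts the assumed $n^{\omega(1)}$ lower bound against homogeneous formulas for this polynomial, and the corollary follows.

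The argument is essentially a direct application of \autoref{lem:homogenisation} and presents no genuine obstacle. The only point that warrants some care is ensuring the degree condition in the lemma is interpreted correctly: it requires $d = O(\log s)$ for $s$ the ABP complexity of $f$ (not the size of the input formula), and for the chosen parameter regime of $\imm{n}{\log n}$ this matches, since its ABP complexity is already polynomial in $n$ while the degree is only $\log n$.
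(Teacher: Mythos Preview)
Your proposal is correct and follows essentially the same approach as the paper: observe that the ABP complexity of $\imm{n}{\log n}$ is $\poly(n)$ so that the degree $\log n$ is $O(\log s)$, then apply \autoref{lem:homogenisation} to convert any formula into a homogeneous one with only polynomial blow-up. The paper states this directly rather than via the contrapositive, but the argument is the same.
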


\noindent To put the requirement of degree being $O(\log n)$ in perspective, note the following.

\begin{remark}[Analogous to Remark 5.12 in \cite{LLS19}]
	The standard divide and conquer approach for computing the iterated matrix multiplication polynomial $\imm{n}{d}$ yields a (homogeneous) formula of size $n^{O(\log d)}$. 
	It would be quite surprising if this standard algorithm were not optimal in terms of formula size.

	Intuitively, improving on the standard divide and conquer algorithm gets harder as $d$ gets smaller. 
	This is because any (homogeneous) formula of size $n^{o(\log d)}$ for computing $\imm{n}{d}$ can be used in a straightforward manner to recursively obtain (homogeneous) formulas for $\imm_{n,D}$ of size $n^{o(logD)}$ for any $D > d$. 
	The case of smaller $d$, which seems harder algorithmically, is thus a natural first candidate for lower bounds.
\end{remark}

\subsection{Structure of the Paper}

We begin, in \autoref{sec:definitions}, with formal definitions for $\abcd$ polynomials and naturally restricted version of circuits, ABPs and formulas that compute them.
Then, in \autoref{sec:structural-statements}, we prove some structural statements, namely \autoref{lem:depth-reduction} and \autoref{lem:homogenisation}.
In \autoref{sec:orderModels}, we prove \autoref{thm:orderForm} along with \autoref{obs:orderCkt} and \autoref{obs:orderABP}.
We then prove our main result (\autoref{thm:main}), that gives a super-polynomial separation between $\abcd$ formulas and ABPs, in \autoref{sec:main}.
Finally, in \autoref{sec:simpleStatements}, we prove the remaining statements mentioned above.


\section{Preliminaries}\label{sec:definitions}

Let us begin by formally defining $\abcd$ polynomials and the naturally restricted versions of circuits, ABPs and formulas that compute them.

\subsection{Abecedarian Polynomials}\label{sec:defn_abcd_poly}

First, we define the notion of a bucketing system.

\begin{definition}[Bucketing System for a Set of Indeterminates]
	Suppose $\set{x_1, \ldots, x_n}$ is a set of indeterminates.
	An ordered set $(X_1, \ldots, X_m)$ is said to be a bucketing system for it, if $\set{X_1, \ldots, X_m}$ forms a partition of the set $\set{x_1, \ldots, x_n}$.

	In this case, we may also denote the bucketing system $(X_1, \ldots, X_m)$ by $\set{X_1, \ldots, X_m}$, since the ordering among $X_1, \ldots X_m$ is clear from context.	
\end{definition}

\noindent Next, let us formally define $\abcd$ polynomials. 

\begin{definition}[Abecedarian Polynomials]\label{def:ordPoly}
	A polynomial $f \in \F\inangle{x_1, \ldots, x_n}$ of degree $d$ is said to be $\abcd$ with respect to a bucketing system $\set{X_1, \ldots, X_m}$ for $\set{x_1, \ldots x_n}$, if  
	\[
		f = f[\emptyset) + \sum_{k=1}^{d} \inparen{\sum_{1 \leq i_1 \leq \cdots \leq i_k \leq m} f[X_{i_1}, \ldots, X_{i_k}]}
	\] 
	where $f[\emptyset)$ is the constant term in $f$, and for any $k \in [d]$, $f[X_{i_1}, \ldots, X_{i_k}]$ is defined as follows. 
	For a polynomial $f$, $f[X_{i_1}, \ldots, X_{i_k}]$ is the homogeneous polynomial of degree $k$ such that for every monomial $\alpha$,
	\[
		\hspace{-1em} \coeff_\alpha(f[X_{i_1}, \ldots, X_{i_k}]) =
		\begin{cases}
			\coeff_\alpha(f) & \text{ if } \alpha = x_{\ell_1} \cdots x_{\ell_k} \text{ with } x_{\ell_j} \in X_{i_j} \text{ for every } j \in [k]\\
			0 & \text{ otherwise.} 
		\end{cases}
	\]
	In this case, we say that $f$ is $\abcd$ with respect to $\set{X_1, \ldots, X_m}$, a bucketing system of size $m$.
\end{definition}

Abecedarian polynomials are essentially generalisations of ordered polynomials (defined by {\Hrubes}, Wigderson and Yehudayoff \cite{HWY11}).
A homogeneous polynomial, of degree $d$, is said to be ordered if the set of variables it depends on can be partitioned into $d$ buckets such that variables occuring in position $k$ only come from the $k$-th bucket.

It is easy to see that any ordered polynomial is also $\abcd$ with respect to the same bucketing system.
This is because position indices are always increasing. 
For example, consider the following version of the \emph{complete homogeneous symmetric polynomial}.
\[
	\ochsym{n}{d} (\vecx) = \sum_{1 \leq i_1 \leq \ldots \leq i_d \leq n} x^{(1)}_{i_1} \cdots x^{(d)}_{i_d}.
\]
It is both ordered as well as $\abcd$ with respect to the buckets $\set{X_k = \setdef{x^{(k)}_i}{i \in [n]}}$.

However, note that there are homogeneous polynomials that are $\abcd$ but not ordered.
The following version of the same polynomial is an example.
\[
	\chsym{n}{d} (\vecx) = \sum_{1 \leq i_1 \leq \ldots \leq i_d \leq n} x_{i_1} \cdots x_{i_d}
\]
is $\abcd$ with respect to $\setdef{X_i}{X_i = \set{x_i}}$, but is not ordered.

The reason is that for a polynomial to be ordered, the bucket labels have to essentially be position labels.
On the other hand, for a polynomial to be $\abcd$ with respect to a bucketing system, the bucket labels can be independent of position.

For example, note that $\ochsym{n}{d}(\vecx)$ is $\abcd$ with respect to the bucketing system $\set{X_i = \setdef{x^{(k)}_i}{k \in [d]}}$ along with the one mentioned earlier.\\

\noindent We now move on to defining algebraic models that naturally compute $\abcd$ polynomials.

\subsection{Abecedarian Circuits}\label{sec:defn_abcd_ckts}

Homogeneous formulas have the property that any vertex can be labelled by a tuple of position indices $(a, b)$ such that all the monomials being computed at that vertex occur exactly from position $a$ to position $b$ in the final polynomial that is being computed by it.
\Hrubes $\text{}$ \etal $\text{}$ \cite{HWY11} defined \emph{ordered} circuits to be those circuits that have this property.\\

\noindent A circuit computing a degree $d$ polynomial $f \in \F \inangle{x_1, \ldots, x_n}$ is said to be ordered, if $\set{X_1, \ldots, X_d}$ forms a partition of $\set{x_1, \ldots, x_n}$ such that  
\begin{itemize}
	\item every gate $v$ in the circuit is labelled by a tuple of position indices $(a,b)$;
	\item if $f_v$ is the polynomial computed at $v$, then
	\begin{itemize}
		\item $f_v$ is homogeneous and has degree $(b-a+1)$;
		\item every monomial in $f_v$ is a product of exactly one variable from each of the buckets $X_a, \ldots, X_b$, multiplied in increasing order of their bucket indices.
	\end{itemize}
\end{itemize}

\noindent We generalise this notion to define circuits that naturally compute $\abcd$ polynomials.
Before we can do that, we need the notion of sub-polynomials of any $\abcd$ polynomial. 

\begin{definition}[Sub-Polynomials of an Abecedarian Polynomial]\label{def:subPoly-of-OrdPoly}
	Suppose $f$ is an $\abcd$ polynomial with respect to the bucketing system $\set{X_1, \ldots, X_m}$, and has degree $d$. For any $1 \leq a \leq b \leq m+1$, $f[a,b)$ is the sub-polynomial of $f$ defined as follows.
	\begin{itemize}
		\item For any $a \in [m+1]$, $f[a, a) = f[\emptyset)$ is the constant term in $f$.
		\item For any $1 \leq a < b \leq m+1$,
		\[
			f[a,b) = \sum_{k=1}^{d} \inparen{\sum_{\substack{i_1, \ldots, i_k \in [m] \\ a = i_1 \leq \cdots \leq i_k < b}} f[X_{i_1}, \ldots, X_{i_k}]}	
		\]
		where $f[X_{i_1}, \ldots, X_{i_k}]$ is as defined in \autoref{def:ordPoly}.
	\end{itemize}
	Further, we say that a polynomial $f$ is of \emph{type} $[a,b)$ if $f = f[a,b)$.
\end{definition}

\noindent Let us now formally define $\abcd$ circuits.

\begin{definition}[Abecedarian Circuits]\label{def:ordCkt}
	For any $a, b \in \N$, let $[a,b)$ denote a set of the form $I = \setdef{i}{a \leq i < b}$. 
	As a convention, $[a,a)$ denotes the empty set for every $a \in \N$. 
	
	A multi-output circuit $\ckt$ is said to be $\abcd$ when
	\begin{itemize}
		\item every gate $v$ in $\ckt$ is associated with a set $I_v = [a,b)$;
		\item if $f_v$ is the polynomial computed at $v$, then $f_v = f[a, b)$;
		\item if $v = v_1 + v_2$, then $I_v = I_{v_1} = I_{v_2}$;
		\item if $v = v_1 \times v_2$ with $I_v = [a, a)$, then $I_{v_1} = I_{v_2} = [a,a)$
		\item  if $v = v_1 \times v_2$ with $I_v = [a, b)$ and $a < b$, then one of the following is true
		\begin{itemize}
			\item $I_{v_1} = [a,b)$ and $I_{v_2} = [b,b)$;
			\item $I_{v_1} = [a,a)$ and $I_{v_2} = [a,b)$;
			\item there exists $a \leq c < b$ such that $I_{v_1} = [a, c+1)$ and $I_{v_2} = [c, b)$.
		\end{itemize} 
	\end{itemize}
	The polynomial computed by $\ckt$ is the sum of the polynomials computed at the different output gates.
\end{definition}

\hspace{.5em}

\noindent Next, we define $\abcd$ ABPs and $\abcd$ formulas as the restricted versions of ABPs and formulas respectively, that naturally compute $\abcd$ polynomials.

\subsection{Abecedarian ABPs and Formulas}\label{sec:defn_abcd_ABP_form}

Homogeneous ABPs have the property that every vertex in it is labelled by a position index such that, polynomials computed between vertices labelled with indices $a$ and $b$ only contain monomials between positions $a$ and $(b-1)$.

We define $\abcd$ ABPs analogously except that the labels on the vertices are bucket labels instead of position labels. 
These restricted ABPs naturally compute $\abcd$ polynomials.

\begin{definition}[Abecedarian ABPs]\label{def:ordABP}
	A multi-input, multi-output ABP $\abp$ is said to be $\abcd$ when
	\begin{itemize}
		\item every vertex in it is labelled by a bucket index;
		\item if $f$ is the polynomial computed between vertices labelled with indices $a$ and $b$ respectively, then $f = f[a,b+1)$. 
	\end{itemize}
	The polynomial computed by $\abp$ is the sum of all the polynomials computed between the various (input, output) gate pairs.
\end{definition}

Similarly, we define $\abcd$ formulas as analogues of homogeneous formulas, with the labels again referring to bucket indices instead of position indices.

\begin{definition}[Abecedarian Formulas]\label{def:ordForm}	
	Let sets of the form $[a,b)$, with $a,b \in \N$, be as defined in \autoref{def:ordCkt}.
	Suppose $\form$ is a formula computing a polynomial $f$ that is $\abcd$ with respect to a bucketing system of size $m$. Then $\form$ is said to be $\abcd$ if $\form$ is in fact a collection of formulas $\setdef{\form_i}{i \in [m]}$ such that for every $i \in [m]$,
	\begin{itemize}
		\item $\form_i$ computes the polynomial $f[i,m+1)$;
		\item every gate $v$ in $\form_i$ is associated with a set $I_v = [a,b)$, and in particular, the root node must be associated with the set $[i, m+1)$
		\item if $f_v$ is the polynomial computed at $v$, then $f_v = f_v[a, b)$;
		\item if $v = v_1 + v_2$, then $I_v = I_{v_1} = I_{v_2}$;
		\item if $v = v_1 \times v_2$ with $I_v = [a, a)$, then $I_{v_1} = I_{v_2} = [a,a)$
		\item  if $v = v_1 \times v_2$ with $I_v = [a, b)$ and $a < b$, then one of the following is true
		\begin{itemize}
			\item $I_{v_1} = [a,b)$ and $I_{v_2} = [b,b)$;
			\item $I_{v_1} = [a,a)$ and $I_{v_2} = [a,b)$;
			\item there exists $a \leq c < b$ such that $I_{v_1} = [a, c+1)$ and $I_{v_2} = [c, b)$.
		\end{itemize} 
	\end{itemize}
	The polynomial computed by $\form$ is the sum of the polynomials computed by the various $\form_i$s.
	Further, $\form$ is said to be homogeneous if each $\form_i$ is homogeneous.
\end{definition}

\hspace{.5em}

\noindent With these definitions in mind, we now move to proving some structural statements. 

\section{Structural Statements}\label{sec:structural-statements}

In this section, we prove two structural statements in the non-commutative setting that are known to be true in the commutative setting. 
Apart from being crucial to our proofs, they are possibly interesting observations in their own right.

\subsection{Depth Reduction for Non-Commutative Formulas}\label{sec:depth-reduction}

Brent \cite{B74} had shown that if there is a formula of size $s$ computing a commutative polynomial $f$, then there is a formula of depth $O(\log s)$ and size $\poly(s)$ that computes the same polynomial. 
We show that this is also true in the non-commutative setting.

The proof is essentially the same as the one by Brent \cite{B74}, just analysed carefully.
We give the complete proof for the sake of completeness.

\depthReduction*

\begin{proof} 
	Suppose $\form$ is a fan-in $2$ formula of size $s$ that computes $f$. 
	Then, we claim the following.

\begin{claim}\label{clm:depth-reduction_induction-step}
	Suppose $\form_0$ is a formula computing a polynomial $f_0$ and has fan-in 2. 
	Then the there exist sub-formulas, $L, \form_1, R, \form_2$, of $\form_0$ such that
	\begin{itemize}
		\item $\form'_0 = L \cdot \form_1 \cdot R + \form_2$ also computes $f_0$;
		\item each of $L, \form_1, R, \form_2$ have size at least $(s/3)$ and at most $(2s/3)$;
		\item if $\form_0$ is homogeneous, then so are $L, \form_1, R, \form_2$;
		\item if $\form_0$ is $\abcd$ with respect to some bucketing system,  $f_\l$, $f_1$, $f_\r$, $f_2$ are polynomials computed by $L, \form_1, R, \form_2$ respectively and $f_0 = f_0[a,b)$, then $f_2 = f_2[a,b)$ and
		\begin{itemize}
			\item each of $L, \form_1, R, \form_2$ are $\abcd$ with respect to the same bucketing system as $\form_0$
			\item when $a = b$, $\qquad f_\l = f_\l[a,a) \qquad f_1 = f_1[a,a) \qquad f_\r = f_\r[a,a)$;
			\item when $a<b$, there exist $a \leq i \leq j \leq b$ such that
			
			\vspace{-2.5em}
			\begin{multline*}
				\hspace{-3em}
				a = i < j = b \quad \implies \quad f_\l = f_\l[a,i) \qquad f_1 = f_1[i,j) \qquad f_\r = f_\r[j,b).\\
				\hspace{-3em}
				a = i = j < b \quad \implies \quad f_\l = f_\l[a,i) \qquad f_1 = f_1[i,j) \qquad f_\r = f_\r[j,b).\\
				\hspace{-3em}
				a = i < j < b \quad \implies \quad f_\l = f_\l[a,i) \qquad f_1 = f_1[i,j+1) \qquad f_\r = f_\r[j,b).\\
				\hspace{-3em}
				a < i = j = b \quad \implies \quad f_\l = f_\l[a,i+1) \qquad f_1 = f_1[i,j) \qquad f_\r = f_\r[j,b).\\
				\hspace{-3em}
				a < i = j < b \quad \implies \quad f_\l = f_\l[a,i+1) \qquad f_1 = f_1[i+1,j+1) \qquad f_\r = f_\r[j,b).\\
				\hspace{-3em}
				a < i < j = b \quad \implies \quad f_\l = f_\l[a,i+1) \qquad f_1 = f_1[i,j) \qquad f_\r = f_\r[j,b).\\
				\hspace{-7cm}
				a < i < j < b \quad \implies \quad f_\l = f_\l[a,i+1) \qquad f_1 = f_1[i,j+1) \qquad f_\r = f_\r[j,b).
			\end{multline*}
		\end{itemize}
	\end{itemize} 
\end{claim}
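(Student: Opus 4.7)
The plan is the classical centroid decomposition argument of Brent, tracked carefully so that the non-commutative order of multiplication along the spine is respected and so that the abecedarian bucket labels propagate correctly. First I would find a \emph{centroid} subformula $\form_1 \subseteq \form_0$: starting from the root, repeatedly descend into the child whose subformula has larger size, stopping as soon as the current subformula has size in $[s/3, 2s/3]$. Such a vertex must exist because sizes drop by less than a factor of $2$ at each step once we are past the root, and the two children's sizes cannot both exceed $2s/3$. Let $P$ denote the root-to-$\form_1$ path in $\form_0$.

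Next I would define $L$, $R$, and $\form_2$ by induction on the length of $P$. If $|P|=0$ then $\form_1 = \form_0$ and I take $L = R = 1$ and $\form_2 = 0$. Otherwise, let $u$ be the child of the root that lies on $P$ and $w$ the sibling, and apply the construction recursively at $u$ to obtain $f_u = L_u \cdot f_1 \cdot R_u + \form_2^u$. Then:
\begin{itemize}
    \item if the root is a sum gate, set $L := L_u$, $R := R_u$, $\form_2 := \form_2^u + w$;
    \item if the root is a product gate $u \times w$, set $L := L_u$, $R := R_u \cdot w$, $\form_2 := \form_2^u \cdot w$;
    \item if the root is a product gate $w \times u$, set $L := w \cdot L_u$, $R := R_u$, $\form_2 := w \cdot \form_2^u$.
\end{itemize}
A straightforward induction on $|P|$ then shows $f_0 = L \cdot f_1 \cdot R + \form_2$. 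For sizes, I would argue that every gate of $\form_0$ either belongs to $\form_1$, or is the root of a subformula hanging off $P$ and therefore appears in exactly one of $L$, $R$, $\form_2$ (together with the $O(|P|)$ ``spine'' gates used to glue them). Since $|\form_1| \in [s/3, 2s/3]$, each of $L$, $R$, $\form_2$ is bounded in size by $s - |\form_1| + O(|P|) \le 2s/3$ up to a small additive slack that can be absorbed by padding with the trivial identity subformula; the lower bound of $s/3$ is obtained the same way by noting that the complement of $\form_1$ inside $\form_0$ has size at least $s/3$ and can be redistributed among $L$, $R$, $\form_2$ by padding the empty ones with a trivial $0$- or $1$-subformula of the appropriate type.

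Homogeneity is immediate: $L$, $R$, and each summand of $\form_2$ are obtained by composing subformulas of $\form_0$ with multiplication and addition gates already present in $\form_0$, so if $\form_0$ is homogeneous then so are all four pieces. The main obstacle, and the most tedious part of the plan, is the abecedarian bookkeeping. Here I would prove by induction on $|P|$ that each recursive step preserves the types prescribed by \autoref{def:ordCkt}/\autoref{def:ordForm}: a sum gate forces all three children (and hence $L_u = L$, $R_u = R$, $\form_2^u$) to share the type of the parent, while a product gate of type $[a,b)$ falls into one of the three cases allowed by the definition, which are exactly what generates the seven sub-cases in the claim (depending on whether the path descends to the left or the right child, and on whether the split index $c$ equals $a$, equals $b-1$, or lies strictly between). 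A short case analysis then checks, for each of the seven configurations, that the types assigned to $f_\l$, $f_1$, $f_\r$ by the inductive hypothesis combine, via the left/right multiplication by the non-path sibling's type, to yield precisely the types listed in the statement. The identity $f_2 = f_2[a,b)$ follows because $\form_2$ is a sum of subformulas each of which is a descendant of $\form_0$ of type $[a,b)$, by the same case analysis applied to the sum-gate step.
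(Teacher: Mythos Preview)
Your proposal is correct and follows essentially the same approach as the paper: find a centroid $\form_1$ by descending to the heavier child, then collect the off-path siblings along the root-to-$\form_1$ path into $L$, $R$, and $\form_2$. The paper builds $L$ and $R$ iteratively from $v$ up to the root and defines $\form_2$ directly as ``$\form_0$ with $v$ replaced by $0$ and redundant gates removed,'' but this yields exactly the same three formulas as your top-down recursion; for the abecedarian types the paper likewise argues by induction along the path that the left endpoint of the label changes only when a factor is prepended to $L$ and the right endpoint only when one is appended to $R$, which is your case analysis in condensed form.

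One remark: your attempt to secure the \emph{lower} bound $s/3$ on $|L|$, $|R|$, $|\form_2|$ by ``padding with a trivial $0$- or $1$-subformula'' does not work (that adds $O(1)$, not $s/3$), and indeed the lower bound is not literally true in general (e.g., $L$ can be the constant $1$). The paper's one-line justification for this lower bound is equally unsupported; fortunately only the upper bound $2s/3$ is ever used in the recursion, so this is a cosmetic flaw in the claim's statement rather than a genuine gap in either argument.
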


\noindent Before proving \autoref{clm:depth-reduction_induction-step}, let us complete the proof of \autoref{lem:depth-reduction} using it.

By the above claim, we have a formula $\form'_0$ computing  $f_0$ that looks like $L \cdot \form_1 \cdot R + \form_2$ where each of $L, \form_1, R, \form_2$ have size at most $(2s/3)$. 
Further if $\form$ is homogeneous, then so are each of $L, \form_1, R, \form_2$.
Hence, $\form'_0$ is homogeneous.
On the other hand, when $\form_0$ is $\abcd$, so are $L, \form_1, R, \form_2$. 
Further, note that $\form'_0$ is also $\abcd$ in this case since $f_\l, f_1, f_\r, f_2$ are of the \emph{correct type} due to \autoref{clm:depth-reduction_induction-step}.\\ 

\noindent In all the cases, recursively applying this technique, on each of $L, \form_1, R, \form_2$, we get
\[
	\depth(s) \leq \depth(2s/3) + 3 \text{\qquad and \qquad} \size(s) \leq 4 \cdot \size(2s/3) + 3.
\]
Note that in the base case, when $s$ is constant, both $\size(s)$ and $\depth(s)$ are constants. Thus,
\[
	\depth(s) = O(\log s) \text{\qquad and \qquad} \size(s) = \poly(s). \qedhere
\]
\end{proof}

\noindent Pictorially, once we have \autoref{clm:depth-reduction_induction-step}, we essentially do the following recursively.

\vspace{1em}
\begin{tikzpicture}
	\node (L) at (-4,-4) {$\operatorname{DepthReduce}(L)$};
	\node (v) at (0,-4) {$\operatorname{DepthReduce}(\form_1)$};
	\node[circle, draw=black] (times1) at (-2,-2) {$\times$}
	edge[->] (L)
	edge[->] (v);

	\node (R) at (2,-2) {$\operatorname{DepthReduce}(R)$};

	\node[circle, draw=black] (times2) at (0,0) {$\times$}
	edge[->] (times1)
	edge[->] (R);
	\node (Fv0) at (5,0) {$\operatorname{DepthReduce(\form_2)}$};
	\node[circle, draw=black] (root) at (2.5,2) {$+$}
	edge[->] (times2)
	edge[->] (Fv0);
  \end{tikzpicture}

\vspace{1em}
\noindent We now complete the proof of \autoref{clm:depth-reduction_induction-step}

\begin{proofof}{\autoref{clm:depth-reduction_induction-step}}
	From the root let us traverse $\form_0$ towards the leaves, always choosing the child that has a larger sub-tree under it, till we find a vertex $v$ such that the associated sub-tree has size at most $(2s/3)$. 
	Since $\form_0$ tree has fan-in 2, we also know that the size of this sub-tree must be at least $(s/3)$. 
	Let this sub-tree be $\form_1$. 
	Additionally, in the case when $\form_0$ is $\abcd$, let us assume that $v$ is labelled with $[i_v,j_v)$.
	
	Let $\mathcal{P}$ be the path from $v$ to the root and $v_{\text{add}}$ the addition gate on $\mathcal{P}$ which is closest to $v$.
	Also let the set of multiplication gates on $\mathcal{P}$ be $\set{v_1, \ldots, v_\ell}$ for some $\ell \in \N$.
	Assume, without loss of generality, that $v_1$ is closest to $v$ and $v_\ell$ to the root.
	Further, for every $i \in [\ell]$, let $L_i$ be sub-formula corresponding to the left child of $v_i$ and $R_i$ the one to its right child.
	Note that for every $i \in [\ell]$, exactly one of children of $v_i$ is a vertex in $\mathcal{P}$.
	We can then define $L$ and $R$ as follows.
	\begin{description}
		\item[Step 1:] Set $L = R = 1$.
		\item[Step 2:] For $i$ from $1$ to $\ell$,
		    \[
		        L =
		        \begin{cases}
                   	L_i \times L & \text{if the right child of $v_i$ is a vertex in $\mathcal{P}$,}\\
                  	L & \text{otherwise.}
                \end{cases}
            \]
            and
            \[
            	R = 
            	\begin{cases}
                   	R & \text{if the right child of $v_i$ is a vertex in $\mathcal{P}$,}\\
                  	R \times R_i & \text{otherwise.}
                \end{cases}
            \]
	\end{description}
	
	\noindent Also define $\form_2$ to be the formula we get by replacing the vertex $v$ and the sub-tree under it with $0$, and then removing the redundant gates.
		
	Clearly, by construction, $\form_1$, $L$, $R$ and $\form_2$ are sub-formulas of $\form_0$.
	Further, $\form_1$ is disjoint from $L$, $R$ and $\form_2$. 
	As a result, since $\form_1$ has size at least $(s/3)$ and at most $(2s/3)$, it must be the case that each of $L$, $R$ and $\form_2$ have size at least $(s/3)$ and at most $(2s/3)$.

	Also, it is not hard to see that $\form'_0 = L \cdot \form_1 \cdot R + \form_2$ computes $f_0$. 
	What is left to check is that when $\form_0$ is homogeneous or $\abcd$, then $L, \form_1, R, \form_2$ have the additional properties claimed.
	The one line proof of this is that each \emph{parse-tree}\footnote{For a definition, see for example \cite{LLS19}} of $\form_0$ is merely restructured in the above process, without changing its value.
	We however go over the proof explicitly for the sake of completeness.
	
	When $\form_0$ is homogeneous, since $L, \form_1, R, \form_2$ are sub-formulas, they are also homogeneous.

	\noindent On the other hand, suppose $\form_0$ is $\abcd$ and $f_0 = f_0[a,b)$.
	Recall that the vertex $v$ was labelled by $[i_v, j_v)$.
	Let us set $i = i_v$ and $j=j_v$. 
	Then, by definition, $\form_1$ is  labelled by $[i,j)$.
	Hence, if $f_1$ is the polynomial computed at $v$, then $f_1 = f_1[i,j)$.
	Further, $\form_1$ is $\abcd$ since it is a sub-formula of $\form_0$ and computes an $\abcd$ polynomial.

	Now let us focus on $\form_2$.
	Essentially $\form_2$ is got by removing from $\form_0$, $v$ and all the multiplication gates on $P$ between $v$ and $v_{\text{add}}$ along with the sub-trees under them.
	Thus $\form_2$ is also $\abcd$ in this case, and if $f_2$ is the polynomial by it, then $f_2 = f_2[a,b)$.
	
	Finally, note that the left indices of labels on the various vertices of $\mathcal{P}$ change only at the gates at which multiplications to $L$ occur.
	Further, note that they occur in the \emph{correct order} and are of the \emph{correct type}.
	Thus, by induction, it is easy to see that the labels on $L$ are consistent with those on the $L_i$s when the respective multiplications happen.
	Therefore $L$ is $\abcd$, and $f_\l = f_\l[a,i)$.	
	
	For similar reasons, $R$ is also $\abcd$ and $f_\r = f_\r[j,b)$.
	This completes the proof.
\end{proofof}

\subsection{Homogenisation}\label{sec:homogenisation}

Raz \cite{R13} had shown that if there is a formula computing a homogeneous polynomial of \emph{low} degree in the commutative world, then it can be assumed without loss of generality that the formula is homogeneous.
We show that his proof also works in the non-commutative setting because of \autoref{lem:depth-reduction}. 
A complete proof is given here for the sake of completeness.

\homogenisation*

\begin{proof}
	We first note that since $s$ is the ABP complexity of $f$, $s' \geq s$.
	Further if $\form$ has depth $r$, then by \autoref{lem:depth-reduction}, we can assume without loss of generality, that $r = O(\log s')$.
	
	In order to construct a homogeneous formula computing $f$, we first homogenise $\form$ to obtain a circuit $\ckt$, and then \emph{unravel} $\ckt$ to make it into a formula $\form'$.	
	
	The first step is done in the usual manner.
	For every gate $v$ in $\form$, we have $d+1$ gates $(v,0)$, $\ldots$, $(v,d)$ in $\ckt$.
	Intuitively if $f_v$ is the polynomial computed at $v$, then the polynomial computed at $(v,i)$ is the degree $i$ homogeneous component of $f_v$. 
	These vertices are then connected as follows.
	
	\begin{itemize}
		\item If $v = u_1 + u_2$, then for every $i \in \set{0, \ldots, d}$, \qquad $(v, i) = (u_1, i) + (u_2, i)$.
		\item If $v = u_1 \times u_2$, then for every $i \in \set{0, \ldots, d}$, \qquad $(v, i) = \sum_{j=0}^{i} (u_1, j) \times (u_2, i-j)$.
	\end{itemize}

	So, we now have a homogeneous circuit $\ckt$ that computes $f$ and has size at most $O(d^2 \cdot s')$.
	Also, the depth of this circuit is at most twice that of $\form$, and the multiplication gates have fan-in $2$.
	
	To convert $\ckt$ into a formula $\form'$, we have to recompute nodes whenever they have to be reused. 
	That is, a particular vertex in $\ckt$ has to be duplicated as many times as there are paths from the vertex to the root. 
	Thus, to upper bound the size of $\form'$, we need to give an upper bound on the number of distinct paths from every vertex of $\ckt$ to its root.
	
	Let us arbitrarily choose a vertex $(v,i)$ in $\ckt$, and consider the path from it to the root. 
	Suppose the path is $(v,i) = (v_1, i_1) \rightarrow \cdots \rightarrow (v_\ell, i_\ell) = (\rt, d)$ where $\ell$ is at most the depth of $\ckt$. 
	Note that it must be the case that $i = i_1 \leq \cdots \leq i_\ell = d$. 
	Hence, if we define $\delta_j = i_{j+1} - i_j$ for $j \in [\ell-1]$, then the $\delta_j$s are non-negative integers such that $\delta_1 + \cdots + \delta_{\ell-1} = (d - i)$. 
	Thus, the number of  choices we have for $(i_2, \ldots, i_\ell)$ such that $i = i_1 \leq \cdots \leq i_\ell = d$, is the same as the number of choices we have for $(\delta_1, \ldots, \delta_{\ell-1})$ such that $\delta_1 + \cdots + \delta_{\ell -  1} = (d - i) \leq d$. 
	This is at most ${\ell + d}\choose{\ell}$.
	
	Note that in this process the fan-in of the gates have not changed, and hence the multiplication gates in $\form'$ continue to have fan-in $2$.
	Further, we know that the $\ckt$ has depth $2r$ and hence $\ell \leq 2r$. 
	Therefore, the number of paths from $(v, i)$ to the root is at most $2r + d \choose 2r$. 
	Hence, if $\form'$ is the formula obtained by unravelling $\ckt$, then $\size(\form') \leq s' \cdot d^2 \cdot {2r + d \choose d}$.	
	Here $r = O(\log(s'))$, and $s \leq s'$ implying that $d = O(\log(s)) = O(\log(s'))$.
	Thus, $\size(\form') \leq \poly(s')$.

	Finally, assume that $\form$ is $\abcd$.
	Then every vertex $v$ is labelled with a tuple of bucket indices, say $(a_v,b_v)$.
	In that case, we add the label $(a_v,b_v)$ to the gates $\set{(v,i)}_{i=0}^{d}$ in $\ckt$ and continue with the proof as is.
	Note that the final formula that we get, $\form'$, is $\abcd$ and all the other properties that were true in the general case, continue to be true.
\end{proof}

\section{Converting Computational Models into Abecedarian Ones}\label{sec:orderModels}

In this section we show that, without loss of generality, circuits and ABPs computing $\abcd$ polynomials can be assumed to be $\abcd$.
For formulas however, we can prove such a statement only in certain cases.

\subsection{Circuits}\label{sec:orderCkts}
{\Hrubes} {\etal} \cite{HWY11} had shown that any circuit computing an ordered polynomial can be assumed to be ordered without loss of generality. 

\begin{theoremwp}[Theorem 7.1 in \cite{HWY11}]\label{thm:orderHomCkts}
	Let $\ckt$ be a circuit of size $s$ computing an ordered polynomial $f$ of degree $d$. 
	Then, there is an ordered circuit $\ckt'$ of size $O(d^3s)$ that computes $f$.
\end{theoremwp}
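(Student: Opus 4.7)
The plan is to build $\ckt'$ by introducing, for each gate $v$ of $\ckt$ and each pair $(a,b)$ with $1 \le a \le b \le d$ (together with a degree-$0$ slot), a new gate $[v,a,b]$ that computes the ``ordered projection'' $(f_v)_{[a,b]}$ of the polynomial $f_v$ at $v$: the sum of those monomials of $f_v$ whose $k$-th variable (counted from the left) lies in bucket $X_k$ for every $a \le k \le b$. Since the output $f$ is ordered, every monomial surviving to the root is one of these ordered pieces, so any non-ordered monomial produced at an intermediate gate must eventually cancel; discarding such junk at every gate will therefore preserve correctness.

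The simulation rules for the gates of $\ckt'$ are as follows. For a leaf labelled by $x \in X_k$, set $[v,k,k]=x$ and zero on every other slot; for a leaf labelled by a scalar $\alpha$, put $\alpha$ in the degree-$0$ slot. For an addition gate $v=v_1+v_2$, set $[v,a,b] = [v_1,a,b]+[v_2,a,b]$, which costs $O(d^2)$ new gates. For a multiplication gate $v=v_1\times v_2$, the crucial splitting identity
\[
[v,a,b] \;=\; \sum_{c=a-1}^{b}\;[v_1,a,c]\cdot[v_2,c+1,b]
\]
holds because any ordered monomial on positions $a,\dots,b$ factors uniquely as an ordered monomial on $a,\dots,c$ times an ordered monomial on $c+1,\dots,b$; this contributes $O(d)$ gates per interval, hence $O(d^3)$ gates per multiplication in $\ckt$. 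The output of $\ckt'$ is the sum of $[\rt,a,b]$ over all valid intervals.

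Correctness is an easy induction on the gates of $\ckt$: projection commutes with addition, and the splitting identity takes care of multiplication. The ordering property is witnessed by the labels $[a,b]$ themselves, which are consistent with \autoref{def:ordCkt} (restricted to its ordered analogue): addition preserves the interval, and multiplication splits it at some $c$. Summing the contributions gives total size $O(d^3 s)$ because each of the $s$ original gates produces $O(d^2)$ new gates, with an extra $O(d)$ multiplicative overhead at multiplications.

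The main obstacle I foresee is a clean justification that one may discard non-ordered intermediate monomials at every gate without changing the root's value. The formal way to state this is that the ordered projection $f \mapsto f_{[a,b]}$ is linear and, restricted to ordered monomials, splits as above under multiplication; so if the root $f$ is ordered, then $f = \sum_{a\le b}(f)_{[a,b]}$, and the inductive identity $[v,a,b] = (f_v)_{[a,b]}$ suffices. Once this is in hand, the rest of the construction and the $O(d^3 s)$ bound are bookkeeping.
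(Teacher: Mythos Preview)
Your proposal is correct and follows essentially the same approach as the paper's own proof of the abecedarian generalisation (\autoref{obs:orderCkt}): split each gate $v$ into $O(d^2)$ copies indexed by intervals $[a,b]$, use the splitting identity at multiplication gates, and observe that the root recovers $f$ because $f$ is ordered. The only cosmetic difference is indexing conventions (you use closed position-intervals $[a,b]$, the paper uses half-open bucket-intervals $[a,b)$), and your worry about discarding non-ordered monomials is handled exactly as you suggest---the projection $f\mapsto (f)_{[a,b]}$ is linear, respects the splitting at products, and at the root gives back $f$ since $f=(f)_{[1,d]}$.
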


\noindent We show that the proof of this statement can be generalised to show \autoref{obs:orderCkt}. 
A complete proof is given for the sake of completeness.

\orderCkt*

\begin{proof}
	Without loss of generality, let us assume that $\ckt$ has fan-in $2$.
	
	We prove the given statement by describing how to construct $\ckt'$ from $\ckt$. 	
	For each gate $v$ in $\ckt$, we make $O(m^2)$ copies in $\ckt'$, $\setdef{(v, [a,b))}{1 \leq a \leq b \leq m+1}$; and if $\rt$ is the output gate in $\ckt$, then we define the set of output gates in $\ckt'$ to be $\set{(\rt, [i,m+1))}_{i \in [m+1]}$. 
	
	Intuitively, if $f_v$ is the polynomial computed at $v$ in $\ckt$, then the polynomial computed at $(v, [a,b))$ is $f_v[a,b)$. 
	Thus if $f$ was the polynomial computed at $\rt$, then the polynomial computed by $\ckt'$ is $\sum_{i=1}^{m+1} f[i,m+1)$ which is indeed $f$.
	
	We ensure this property at every gate by adding edges as follows. 	
	\begin{itemize}
		\item If $v$ is an input gate labelled by a field element $\gamma$, 
		\begin{itemize}
			\item we set $(v, [a,a)) = \gamma$ for every $a \in [m+1]$;
			\item we set $(v, [a,b)) = 0$ for every $1 \leq a < b \leq m+1$.
		\end{itemize}
		\item If $v$ is an input gate labelled by a variable $x_i$ and $x_i \in X_k$, 
		\begin{itemize}
			\item we set $(v, [k, k+1)) = x_i$;
			\item we set $(v, [a,b)) = 0$ for every $a \neq k$, $b \neq k+1$.
		\end{itemize} 
		\item If $v = v_1 + v_2$, \qquad we set $(v, [a,b)) = (v_1, [a,b)) + (v_2, [a,b))$ for every $a \leq b \in [m+1]$.
		\item If $v = v_1 \times v_2$, \qquad we set $(v, [a,a)) = (v_1, [a,a)) \cdot (v_2, [a,a))$ for every $a \in [m+1]$; \quad and 
		\[
			\hspace{-2.5em}
			(v, [a,b)) = (v_1, [a,a)) \cdot (v_2, [a,b)) + (v_1, [a,b)) \cdot (v_2, [b,b)) + \sum_{c=a}^{b-1} (v_1, [a, c+1)) \times (v_2, [c, b))
		\]
		for every $1 \leq a < b \leq m+1$.
	\end{itemize}
	\noindent Finally, for every $1 \leq a \leq b \leq m+1$, we associate the gate $(v, [a,b))$ in $\ckt'$ with the set $[a,b)$. 
	
	Using induction, one can easily show that the gates in $\ckt'$ have the claimed properties. Hence $\ckt'$ is indeed an $\abcd$ circuit computing $f$. 
	Further for every gate $v$ in $\ckt$, there are at most $O(m^3)$ vertices in $\ckt'$. Thus the size of $\ckt'$ is $O(m^3 s)$.
\end{proof}

\subsection{Algebraic Branching Programs}\label{sec:orderABPs}

Next, we show that a similar statement is true for ABPs as well.

\orderABP*

\begin{proof}
	Let $f$ have degree $d$ and be $\abcd$ with respect to the buckets $\set{X_i}_{i=1}^{m}$, where $X_i = \setdef{x_{i,j}}{j \in [n_i]}$.	
	Without loss of generality, we can assume that $\abp$ is homogeneous\footnote{Every edge is labelled by a homogeneous form.}.
	If $f$ is not homogeneous, $\abp$ can be thought of as a collection of homogeneous ABPs $\set{\abp_1, \ldots, \abp_d}$ where $\abp_k$ computes the $k$-th homogeneous component of $f$. 
	
	We prove the theorem by describing how to construct $\abp'$. 
	For each vertex $v$ in $\abp$, make $O(m)$ copies in $\abp'$, namely $\setdef{(v, a)}{0 \leq a \leq m}$. 
	Intuitively, if $g_{(u,v)}$ is the polynomial computed between $u$ and $v$ in $\abp$, then the polynomial computed between $(u, a)$ and $(v, b)$ in $\abp'$ is $g_{(u,v)}[a,b+1)$.
	The way we ensure this property at every vertex is by adding edges in $\abp'$ as follows.
	
	\begin{quote}
		For any two vertices $u$, $v$ in $\abp$, suppose there is an edge between them that is labelled with $\sum_{i \in [m]} \sum_{j \in [n_i]} \gamma_{i,j} x_{i,j}.$ 
		Then, for every $a, b \in [m]$ with $a \leq b$, add an edge from $(u, a)$ to $(v, b)$ with label $\sum_{i = a}^{b} \inparen{\sum_{j \in [n_i]} \gamma_{i,j} x_{i,j}}$.
	\end{quote}

	\noindent Also, associate the bucket index $a$ with the gate $(v, a)$ in $\abp'$. 
	
	By induction, one can easily show that the gates in $\abp'$ have the claimed property.
	Hence $\abp'$ is indeed an $\abcd$ ABP computing $f$.
	Further, every vertex $v$ in $\abp$, there are at most $O(m)$ vertices in $\abp'$.
	Therefore, the size of $\abp'$ is $O(ms)$.
\end{proof}

\subsection{Formulas}\label{sec:orderForms}

Finally we show that in the case of formulas, we can prove a similar statement only when the polynomial is $\abcd$ with respect to a bucketing system of \emph{small size}.
The proof is very similar to that of \autoref{lem:homogenisation}.

\orderForm*

\begin{proof}
	Let us assume additionally that $\form$ has depth $r$. 
	Now \autoref{lem:depth-reduction} implies that $r = \log(s)$ without loss of generality.
	By \autoref{obs:orderCkt}, there is an $\abcd$ circuit $\ckt$ that computes $f$ and has size at most $s' = O(s \cdot m^3)$. 
	Further its proof implies that the depth of $\ckt$ is at most $2r$.
	
	To convert $\ckt$ into an $\abcd$ formula $\form'$, we have to recompute a node each time it has to be reused. 
	That is, a particular vertex in $\ckt$ has to be duplicated as many times as there are paths from the vertex to the root. 
	Thus to upper bound the size of $\form'$, we need to give an upper bound on the number of distinct paths from every vertex in $\ckt$ to its root.
	
	Let us arbitrarily choose a vertex $(v,[a,b))$ in $\ckt$, and consider the path from it to the root.
	Suppose the path is $(v,[a,b)) = (v_1, [a_1, b_1)) \rightarrow \cdots \rightarrow (v_\ell, [a_\ell, b_\ell)) = (\rt, [i, m+1))$ for some $\ell$ that is at most the depth of $\ckt$. 
	Note that it must be the case that 
	\[
		i \leq a_\ell \leq \cdots \leq a_1 \leq a \leq b \leq b_1 \leq b_\ell \leq m+1.
	\]

	Let us define $\delta_j = a_{j} - a_{j+1}$ and $\delta'_{j} = b_{j+1} - b_{j}$ for $j \in [\ell - 1]$. 
	Then, the number of choices we have for $(a_1, \ldots, a_\ell)$ and $(b_1, \ldots, b_\ell)$ such that 
	\[
		i = a_\ell \leq \cdots a_1 = a \leq b = b_1 \leq \cdots \leq b_\ell = m+1
	\]
	is the same as the number of choices we have for $(\delta_1, \ldots, \delta_{\ell-1}, \delta'_1, \ldots, \delta'_{\ell - 1})$ such that 
	\[
		\delta_1 + \cdots + \delta_{\ell - 1} + \delta'_1 + \cdots + \delta'_{\ell - 1} = (m+1 - (b-a) - i) \leq m.	
	\] 
	This is clearly at most ${2\ell + m}\choose{m}$.
	
	Further, we know that the $\ckt$ has depth $2r$ and hence $\ell \leq 2r$. 
	Therefore, the number of paths from $(v, i)$ to the root is at most $4r + m \choose m$. 
	Hence if $\form'$ is the formula obtained by unravelling $\ckt$, then $\size(\form') \leq s' \cdot m^2 \cdot {4r + m \choose m}$.	
	Here $s'  = O(m^3 \cdot s)$, $r = O(\log(s))$ and $m = O(\log(s))$.
	Thus, $\size(\form') \leq \poly(s)$.
\end{proof}

\section{Separating Abecedarian ABPs and Abecedarian Formulas}\label{sec:main}

In this section, we prove our main theorem: a super-polynomial separataion between the powers of $\abcd$ formulas and ABPs.
Before proceeding to the proof however, we first go over some observations that will help us with the proof.

\subsection{Some Simple Observations}

The two main polynomials we will be working with are $\lchsym{n}{d}$ and $\chsym{n}{d}$.
Let us recall their definitions.
\[
	\lchsym{n}{d}(\vecx) = \sum_{i_0=1}^{n} \inparen{\sum_{i_0 \leq i_1 \leq \ldots \leq i_d \leq n} x_{i_0, i_1} \cdot x_{i_1,i_2} \cdots x_{i_{d-1}, i_d}},
\]
is $\abcd$ with respect to the bucketing system $\set{X_1, \ldots, X_n}$ where $X_i = \setdef{x_{i,j}}{j \in [n]}$, and
\[
	\chsym{n}{d}(\vecx) = \sum_{1 \leq i_1 \leq \ldots \leq i_d \leq n} x_{i_1} \cdots x_{i_d}.	
\]
is $\abcd$ with respect to the bucketing system $\setdef{X_i}{X_i = \set{x_i}}$.\\

\noindent We begin with the notion of a \emph{linked} $\abcd$ formula computing $\lchsym{n}{d}(\vecx)$.

\begin{definition}
	An $\abcd$ formula computing $\lchsym{n}{d}$ is said to be \emph{linked} if at every gate, all the monomials occuring in the polynomial computed at that gate has the following property.
	\[
		x_{ij} \text{ appears right before } x_{i'j'} \text{ in the monomial} \implies j=i'. \qedhere	
	\]
\end{definition}

\noindent The first observation shows that any $\abcd$ formula computing $\lchsym{n}{d}(\vecx)$ can be assumed to be \emph{linked} without loss of generality.

\begin{observation}\label{obs:orderInterval}
	Let $\form$ be a homogeneous $\abcd$ formula computing $\lchsym{n}{d}(\vecx)$ of size $s$, and let the multiplication gates of $\form$ have fan-in $2$. 
	Then there is a homogeneous \emph{linked} $\abcd$ formula $\form'$ computing the same polynomial of size $O(s)$.
\end{observation}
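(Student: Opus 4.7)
The plan is to produce $\form'$ from $\form$ by a top-down recursion that enforces the linkage property at every multiplication boundary. The key observation is that at an $\abcd$ multiplication $v = v_1 \times v_2$ in the interesting case $I_{v_1} = [a, c+1)$ and $I_{v_2} = [c, b)$, every monomial of $f_{v_2}$ starts with a variable of the form $x_{c, \cdot}$ (first variable in bucket $X_c$), so a product monomial $m_1 m_2 \in f_{v_1} f_{v_2}$ is linked at the boundary iff the last variable of $m_1$ has second index equal to $c$. This suggests enforcing linkage by projecting $f_{v_1}$ onto the subspace of polynomials whose last variable has second index $c$.

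I define the projection operator $\pi_c$ that zeroes out all monomials whose final variable has second index different from $c$. It commutes with addition and satisfies $\pi_c(fg) = f \cdot \pi_c(g)$, since the last variable of a product is contributed by the right factor. The construction of $\form'$ is a top-down recursion on $\form$ carrying a projection $P$: starting at each root of $\form$ with $P$ being the identity, I push $P$ through additions into both children, and through non-interesting multiplications (multiplications by a constant factor) into the variable-bearing factor unchanged. At each interesting multiplication $v = v_1 \times v_2$ with $I_{v_2} = [c, b)$, I recurse on the right child $v_2$ with the inherited $P$ (since $v_2$ contributes $v$'s last variable) and on the left child $v_1$ with the fresh projection $\pi_c$ (enforcing linkage at this boundary). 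At a leaf $x_{i,j}$ carrying projection $P$, I output $x_{i,j}$ if $P$ is the identity or $P = \pi_j$, and $0$ otherwise. Since $\form$ has fan-out one, this recursion visits each gate of $\form$ at most once, preserves the $\abcd$ labels and homogeneity, and only zeroes out some leaves, so $\size(\form') \leq \size(\form) = O(s)$.

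The correctness is established by a straightforward bottom-up induction showing that every gate $v$ reached with projection $P$ computes $P(\mathrm{link}(f_v))$, where $\mathrm{link}(f)$ denotes the sum of the linked monomials of $f$. At each root of $\form_i$, invoked with the identity projection, the output is $\mathrm{link}(\lchsym{n}{d}[i, n+1)) = \lchsym{n}{d}[i, n+1)$, since $\lchsym{n}{d}$ is itself linked. The main content of the induction is the identity
\[
	\mathrm{link}(f_{v_1} \cdot f_{v_2}) = \pi_c(\mathrm{link}(f_{v_1})) \cdot \mathrm{link}(f_{v_2})
\]
at each interesting multiplication, which exactly captures the boundary-linkage condition. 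The main obstacle is to verify this identity and to track carefully that the polynomial computed at each internal gate is indeed linked (so that $\form'$ satisfies the linkage property at \emph{every} gate, not just at the root); once this is in place, the size bound and preservation of the $\abcd$ structure follow immediately from the gate-by-gate nature of the transformation.
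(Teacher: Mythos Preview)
Your proposal is correct and takes essentially the same approach as the paper. The paper phrases the construction from the leaves upward---for each variable leaf $x_{i,j}$, it locates the nearest multiplication gate on the path to the root whose left child lies on that path, reads the right child's label $[a,b)$, and zeroes the leaf when $j \neq a$---whereas you phrase it as a top-down recursion carrying a last-variable projection $\pi_c$ that is refreshed whenever you descend into a left child; these two descriptions yield the identical leaf-zeroing operation, and your explicit invariant $P(\mathrm{link}(f_v))$ together with the identity $\mathrm{link}(f_{v_1}\cdot f_{v_2}) = \pi_c(\mathrm{link}(f_{v_1}))\cdot \mathrm{link}(f_{v_2})$ is a slightly more formal packaging of the paper's ``no valid monomial is killed'' and contradiction arguments.
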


\begin{proof}
	For any leaf $\ell$ in $\form$ labelled by a variable, say $x_{i,j}$, suppose $\mathcal{P}$ is the path from $\ell$ to the root. 
	Consider the set of multiplication gates on $\mathcal{P}$ whose left child is part of $\mathcal{P}$, and let $v$ be the one that is closest to $\ell$.
	Since $\form$ is $\abcd$, the right child of $v$ must be associated with a set, say $[a,b)$.
	If $j \neq a$, we set the label of $\ell$ to zero; otherwise we let it be $x_{i,j}$.
	
	Note that this operation does not kill any \emph{valid} monomial.
	Let $\form'$ be the formula we get by performing the above operation on every leaf of $\form$ that is labelled by a variable.
	$\form'$ is clearly 
	homogeneous and $\abcd$.
	We show that $\form'$ is also \emph{linked}.
	
	Suppose that is not the case.
	Then there is must be a \emph{problematic} vertex in $\form'$.
	Let $v$ be such a vertex of minimal height.
	That is, there is a monomial in the polynomial computed at $v$ in which, say, $x_{i,j}$ appears right before $x_{i',j'}$ but $j \neq i'$.
	Further, the sub-formulas corresponding to the children of $v$ are linked.
	Note that $v$ must be a multiplication gate; not a leaf or an addition gate.

	Let $f_\l$ and $f_\r$ be the polynomials computed at the left and right children of $v$ respectively.
	Also, let $[a,b)$ be the set associated with the right child of $v$.
	Then, it must be the case that the first variable in any monomial in $f_\r$ looks like $x_{a,j'}$ for some $j'$.
	Further, there must be a monomial in $f_\l$ in which the last variable looks like $x_{i,j}$ for $j \neq a$.
	
	Look at the leaf corresponding to this variable.
	Let this leaf be $\ell$ and let $\mathcal{P}$ be the path from $\ell$ to the root.
	Since $x_{i,j}$ is the right most variable in $f_\l$, it must be the case that $v$ is the multiplication gate that is closest to $\ell$, whose left child is on $\mathcal{P}$.
	But then, we should have set $x_{i,j}$ to zero since $j \neq a$.
	Hence, such a monomial can not appear in $f_\l$.

	This shows that $\form'$ is indeed a homogeneous \emph{linked} $\abcd$ formula of size at most that of $\form$ that computes $\lchsym{n}{d}(\vecx)$.
\end{proof}

\noindent The next observation shows that there is a $\poly$-sized homogeneous $\abcd$ formula that computes $\chsym{n}{\log n}(\vecx)$ .

\begin{observation}\label{obs:UBonCHSYMlogDeg}
	$\chsym{n/2}{\log n}(\vecx)$ can be computed by a homogeneous $\abcd$ formula of size $\poly(n)$.
\end{observation}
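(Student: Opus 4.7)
The plan is to build the formula in three stages: (i) produce a size-$\poly(n)$ formula (not yet $\abcd$ or homogeneous) via Lagrange interpolation in a fresh parameter $t$, (ii) use \autoref{thm:orderForm} with a coarse bucketing of size $O(\log n)$ to abecedarianise, and (iii) homogenise via \autoref{lem:homogenisation}.

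The starting identity is
\[
\chsym{n/2}{d}(\vecx) \;=\; [t^d]\, \prod_{i=1}^{n/2}\bigl(1 + (tx_i) + (tx_i)^2 + \cdots + (tx_i)^d\bigr),
\]
which holds because $t$ commutes with every $x_i$: expanding yields $\sum t^{\sum_i k_i} x_1^{k_1}\cdots x_{n/2}^{k_{n/2}}$, so the $t^d$-coefficient picks out exactly the non-decreasing degree-$d$ monomials. Denote the truncated product by $P(t)$; its $t$-degree is at most $(n/2)d$, so Lagrange interpolation at $(n/2)d+1$ distinct field values $t_\ell$ gives constants $c_\ell$ with $\chsym{n/2}{d} = \sum_\ell c_\ell\, P(t_\ell)$. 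Each factor $\phi_i(t_\ell) = \sum_{k=0}^{d}(t_\ell x_i)^k$ is a univariate polynomial in $x_i$ computable by a formula of size $O(d)$ using the identity $\sum_{k=0}^{2^r-1} y^k = \prod_{j=0}^{r-1}(1+y^{2^j})$ applied with $y = t_\ell x_i$; hence $P(t_\ell)$ has a formula of size $O(nd)$, and the full interpolation produces a formula for $\chsym{n/2}{\log n}$ of size $O(n^2\log^2 n) = \poly(n)$.

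For the remaining two stages, observe that $\chsym{n/2}{\log n}$ is $\abcd$ with respect to any coarsening of the singleton bucketing $X_i = \{x_i\}$; grouping the variables into $m = O(\log n)$ blocks of consecutive indices gives a valid bucketing system. With the formula size $s = \poly(n)$, the hypothesis $m = O(\log s)$ of \autoref{thm:orderForm} is satisfied, so it converts the interpolation formula into an $\abcd$ formula of size $\poly(s) = \poly(n)$ with respect to this coarse bucketing. Finally, since $\log n = O(\log s)$ (the ABP complexity of $\chsym{n/2}{\log n}$ being $O(n\log n)$), \autoref{lem:homogenisation} homogenises the formula without destroying the $\abcd$ property or polynomial size, producing the required homogeneous $\abcd$ formula. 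The main subtlety — that \autoref{thm:orderForm} does not apply to the natural singleton bucketing (where $m = n/2 \gg \log s$) — is sidestepped by the coarsening trick, which still yields a valid $\abcd$ formula for $\chsym{n/2}{\log n}$.
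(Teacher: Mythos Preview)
Your interpolation identity and homogenisation step are fine, but the detour through \autoref{thm:orderForm} is both unnecessary and, for the paper's purposes, insufficient.

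Unnecessary: the interpolation formula is already $\abcd$ with respect to the singleton bucketing $X_i=\{x_i\}$. Each factor $1+\sum_{j=1}^{d}(t_\ell x_i)^j$ depends only on $x_i$, and the factors are multiplied in increasing order of $i$; hence every partial product respects the bucket order, and the gates can be labelled directly with intervals of the form $[a,b)$ over singleton indices. Interpolation is a linear combination of such products and preserves the labelling. This is exactly what the paper does: it notes the expression already yields an $\abcd$ formula of size $O(n(\log n)^2)$ with respect to $\{X_i=\{x_i\}\}$, interpolates out $t$, and then applies \autoref{lem:homogenisation}. No appeal to \autoref{thm:orderForm} is needed.

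Insufficient: your coarsening trick yields an $\abcd$ formula with respect to a bucketing of size $O(\log n)$, not the singleton bucketing. But the downstream use of this observation (in \autoref{obs:amplifyDeg} via \autoref{obs:subformula}) requires extracting the sub-polynomials $\chsym{n}{d}[a,b+1)$ for \emph{every} $a,b\in[n]$, i.e.\ at singleton granularity, in order to substitute them for the placeholders $x_{a,b}$ in $\lchsym{n}{d'}$. An $\abcd$ formula with respect to $O(\log n)$ coarse blocks only lets you pull out sub-polynomials indexed by block boundaries, which is not fine enough for that substitution. So while your argument produces \emph{an} $\abcd$ formula, it is with respect to the wrong bucketing for the role this observation plays in the proof of \autoref{thm:main}.
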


\begin{proof}
	Consider the following polynomial over variables $\set{t, x_1, \ldots, x_n}$, where we think of $t$ as a commuting variable and $x_1, \ldots, x_n$ as non-commuting variables. 
	\[
		f_{n, d}(\vecx) = \prod_{i=1}^{n} \inparen{1 + \sum_{j=1}^{d} t^j \cdot x_i^j}
	\]
	Note that the coefficient of $t^d$ in $f_{n, d}(\vecx)$ is exactly $\chsym{n}{d}(\vecx)$.
	Further, it is not hard to see that $f_{n/2, \log n}(\vecx)$ is $\abcd$ in terms of $\vecx$ with respect to the bucketing system $\setdef{X_i}{X_i = \set{x_i}}$, and that the given expression results in an $\abcd$ formula of size $O(n (\log n)^2)$. 
	
	Since $t$ is a commuting variable, we can use the usual interpolation techniques \cite{BC92}, to get an $\abcd$ formula computing $\chsym{n/2}{\log n}(\vecx)$ of size $O(n \log n \cdot n (\log n)^2) = O(n^2 (\log n)^3) = \poly(n)$. 
	Since the degree of $\chsym{n/2}{\log n}(\vecx)$ is $O(\log n)$, by \autoref{lem:homogenisation}, there is a homogeneous $\abcd$ formula computing $\chsym{n/2}{\log n}(\vecx)$ of size $\poly(n)$.
\end{proof}

\noindent Another simple observation is that if we are given a homogeneous $\abcd$ formula for an $\abcd$ polynomial, then we almost immediately have one for its various sub-polynomials.

\begin{observation}\label{obs:subformula}
	Suppose there is a homogeneous $\abcd$ formula $\form$ computing a polynomial $f$ that is $\abcd$ with respect to a bucketing system of size $m$. 
	Then, for any $a, b \in [m+1]$, there is a homogeneous $\abcd$ formula $\form_{a,b}$ of size $s$ that computes $f[a,b)$.
\end{observation}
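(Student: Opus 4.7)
The plan is to take the component $\form_a$ of the abecedarian formula $\form$ (which by definition computes $f[a, m+1)$) and truncate it by substituting $0$ for every variable that belongs to a bucket $X_b, X_{b+1}, \ldots, X_m$. The resulting formula has size no larger than $\form_a$, which is at most $s$, and I claim it computes $f[a,b)$.

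The key observation that justifies the substitution is that in an abecedarian polynomial, the variables of any monomial appear in non-decreasing bucket order. Consequently, a monomial that lies in $f[a, m+1)$ (i.e., whose first variable is in $X_a$) lies in $f[a, b)$ if and only if its last variable lies in $\{X_a, \ldots, X_{b-1}\}$, which in turn is equivalent to saying that \emph{every} variable of the monomial lies in $\{X_a, \ldots, X_{b-1}\}$. Hence killing all variables in $X_b \cup \cdots \cup X_m$ preserves exactly the monomials of $f[a, b)$, and destroys the rest. The substitution therefore transforms the polynomial $f[a, m+1)$ computed at the root of $\form_a$ into $f[a, b)$.

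Next, I need to check that the formula remains abecedarian after the substitution. The substitution only changes leaf labels (some leaves become $0$); it leaves the gate labels $I_v = [a', b')$ and the multiplication/addition structure untouched. For any gate $v$ with label $[a', b')$ originally computing $f_v = f_v[a', b')$, the new polynomial $f'_v$ is obtained by dropping from $f_v$ those monomials that touch $\{X_b, \ldots, X_m\}$. Since $f'_v$ is a subsum of $f_v$, it remains of type $[a', b')$, so the label is still valid. The structural conditions on sum and product gates in \autoref{def:ordForm} depend only on labels and the graph, so they are preserved verbatim. To finally package this as an abecedarian formula in the sense of a collection $\{\form'_i\}_{i \in [m]}$, take $\form'_a$ to be the truncated $\form_a$ and $\form'_i$ for $i \neq a$ to be the trivial zero formula (correct because $f[a,b)$ has no monomials starting outside $X_a$).

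The only real thing to verify carefully is the type-preservation claim for every gate under the substitution, but since $f'_v$ is literally a sub-sum of the monomials of $f_v$ which was already of type $[a', b')$, no new structural obligation arises. So I do not expect any genuine obstacle; the proof is essentially a one-step substitution argument backed by the non-decreasing-bucket invariant of abecedarian polynomials.
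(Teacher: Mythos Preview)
Your proposal is correct and follows essentially the same approach as the paper: take the component $\form_a$ (which computes $f[a,m+1)$) and set to zero all variables in buckets $X_b,\ldots,X_m$. In fact you supply more justification than the paper does for why the substitution kills exactly the right monomials and why the resulting formula remains homogeneous and abecedarian.
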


\begin{proof}
	Recall that if $\form$ is a homogeneous $\abcd$ formula computing $f$, then $\form$ is in fact a set of formulas $\setdef{\form_i}{\form_i \text{ computes } f[i,m+1)}$. 
	Consider the formula $\form_a$ and set all variables that belong to buckets $\set{X_b, \ldots, X_m}$ to zero in $\form_a$. 
	This operation clearly kills exactly the monomials in $f[a,m+1)$ that are not in $f[a,b)$.
	Thus if we call this new formula $\form_{a,b}$, then $\form_{a,b}$ is homogeneous, $\abcd$ and computes $f[a,b)$.
\end{proof}

\noindent The next observation is extremely crucial, since it allows us to \emph{amplify the degree} of $\chsym{n}{d}$. 

\begin{observation}\label{obs:amplifyDeg}
	Suppose there is a homogeneous $\abcd$ formula computing $\chsym{n}{d}(\vecx)$ of size $s$, and a homogeneous \emph{linked} $\abcd$ formula computing $\lchsym{n}{d'}(\vecx)$ of size $s'$. Then, there is a homogeneous $\abcd$ formula computing $\chsym{n}{(d \cdot d')}(\vecx)$ of size $(s \cdot s')$.
\end{observation}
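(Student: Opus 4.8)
The plan is to compute $\chsym{n}{dd'}$ by substituting, into the given linked $\abcd$ formula for $\lchsym{n}{d'}$, a shifted copy of the given $\abcd$ formula for $\chsym{n}{d}$ at each leaf. Intuitively a monomial $x_{m_1}\cdots x_{m_{dd'}}$ of $\chsym{n}{dd'}$ should be cut into $d'$ consecutive sorted blocks of length $d$, and the $k$-th block $x_{m_{(k-1)d+1}}\cdots x_{m_{kd}}$ should be the substituted image of the $k$-th variable $x_{i_{k-1},i_k}$ of a monomial of $\lchsym{n}{d'}$, with $i_{k-1}$ recording the first index $m_{(k-1)d+1}$. So for a leaf labelled $x_{p,q}$ I would plug in $P_{p,q} := \chsym{n}{d}[p,q+1)$, the degree-$d$ homogeneous polynomial of monomials $x_{r_1}\cdots x_{r_d}$ with $r_1=p\le r_2\le\cdots\le r_d\le q$. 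By \autoref{obs:subformula}, $P_{p,q}$ has a homogeneous $\abcd$ formula of size at most $s$ and type $[p,q+1)$.

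First I would settle the combinatorial identity. Because the $\lchsym{n}{d'}$ formula is \emph{linked}, every gate computes monomials of the shape $x_{a,i_1}x_{i_1,i_2}\cdots$, so the substitution $x_{p,q}\mapsto P_{p,q}$ turns a monomial $x_{i_0,i_1}\cdots x_{i_{d'-1},i_{d'}}$ of $\lchsym{n}{d'}$ into $P_{i_0,i_1}\cdots P_{i_{d'-1},i_{d'}}$, and the blocks concatenate into one sorted monomial since the last index of block $k$ is at most $i_k$, which is the first index of block $k+1$. The only defect is that the final index $i_{d'}$ of a monomial of $\lchsym{n}{d'}$ ranges freely up to $n$, so $x_{m_1}\cdots x_{m_{dd'}}$ would be produced $n+1-m_{dd'}$ times. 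I would remove this by treating the last block specially: call a leaf of the $\lchsym{n}{d'}$ formula a \emph{tail leaf} if the path from it to the root turns right at every multiplication gate (equivalently, that occurrence is the last variable of every monomial it appears in), and substitute a tail leaf labelled $x_{p,q}$ by $P'_{p,q}$, the sub-polynomial of $P_{p,q}$ whose monomials end exactly in $x_q$. This $P'_{p,q}$ is obtained from the $\chsym{n}{d}$ formula by zeroing out every tail leaf not labelled $x_q$, so it again has a homogeneous $\abcd$ formula of size $O(s)$ and type $[p,q+1)$. With this change block $d'$ pins $m_{dd'}=i_{d'}$, and one checks that the correspondence $\big(i_0\le\cdots\le i_{d'};\ \text{one monomial per block}\big)\mapsto x_{m_1}\cdots x_{m_{dd'}}$ is a bijection onto the monomials of $\chsym{n}{dd'}$, with $i_0=m_1$, $i_k=m_{kd+1}$ for $1\le k<d'$, and $i_{d'}=m_{dd'}$. (I would remark that this is exactly what the "linked" hypothesis buys: the blocks glue into a single sorted monomial with no ambiguity except at the last index, which the use of $P'$ on tail leaves kills.)

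Next I would check that the resulting formula $\widetilde\form$ is homogeneous of degree $dd'$ — immediate, a homogeneous degree-$d$ object is placed into a homogeneous degree-$d'$ formula — and that it is $\abcd$ with respect to $\setdef{X_i}{X_i=\set{x_i}}$. I would label a gate $v$ of $\widetilde\form$ by $[a,b)$, where $x_a$ is the common first variable of the polynomial at $v$ and $b-1$ is the largest bucket occurring there as a last variable, and verify the composition rules gate by gate. Substitution never changes the first-variable bucket, and $P_{p,q}$, $P'_{p,q}$ have last variables only in buckets $\set{p,\dots,q}$ (and $P'_{p,q}$ exactly in bucket $q$); together with linkedness — which forces the second index of the last variable of a multiplication gate's left factor to equal the first-variable bucket of its right factor — this is precisely what makes the "grown" right endpoint of a left factor's type line up with the unchanged left endpoint of the right factor's type. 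Finally $\size(\widetilde\form)\le s\cdot s'$: $\widetilde\form$ consists of the internal gates of the $\lchsym{n}{d'}$ formula together with one size-$O(s)$ copy per leaf, and that formula has at most $s'$ leaves.

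The main obstacle I expect is the last step — confirming $\widetilde\form$ is genuinely $\abcd$. Replacing a single-bucket leaf of type $[p,p+1)$ by a degree-$d$ block of type $[p,q+1)$ enlarges the right coordinate of the type, and one must track how these enlargements propagate up the tree and verify they stay consistent at every $+$ and $\times$ gate (and distinguish tail from non-tail gates). The slogan "linkedness makes it work" is clear; turning it into a clean case analysis over the three multiplication patterns of \autoref{def:ordForm} is where the real work lies.
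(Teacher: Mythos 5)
Your proposal follows the same substitution strategy as the paper, but you have correctly identified a genuine gap in the paper's argument and your fix is sound. The paper asserts a bijection between monomials of $\chsym{n}{dd'}$ and those produced by substituting $x_{a,b}\mapsto\chsym{n}{d}[a,b+1)$ into $\lchsym{n}{d'}$; as you observe, this fails: the last index $i_{d'}$ of a monomial of $\lchsym{n}{d'}$ ranges freely above the last index $m_{dd'}$ of the target monomial, so each target monomial is produced $n+1-m_{dd'}$ times rather than once. (Concretely, for $n=3$, $d=d'=1$ the naive substitution yields $3x_1+2x_2+x_3$ instead of $\chsym{3}{1}$.) Your remedy --- replacing tail leaves by $P'_{p,q}$, the sub-polynomial of $\chsym{n}{d}[p,q+1)$ with last variable exactly $x_q$ --- pins $i_{d'}=m_{dd'}$, restores a true bijection, and the observation (together with its downstream uses in the proof of the main theorem) then goes through with size $O(s\cdot s')$.

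Two small points worth tidying. First, define a \emph{tail leaf} slightly more liberally: a variable-labelled leaf such that at every multiplication gate on its root-path, either the path enters via the right child, or the other child computes a constant (a degree-$0$ subformula). With this relaxation the equivalence with ``last variable of every monomial it occurs in'' is exact, and it is also what makes the claim that $P'_{p,q}$ is obtained from $\form_{p,q+1}$ (the formula for $\chsym{n}{d}[p,q+1)$ given by \autoref{obs:subformula}) by zeroing out tail leaves not labelled $x_q$ airtight. Second, the step you flag as ``the real work'' --- checking the constructed formula is $\abcd$ --- falls out cleanly from linkedness and is not more delicate than the paper's own (unstated) verification: at a multiplication gate of the linked formula with children of types $[a,c+1)$ and $[c,b)$, linkedness forces the left child's last variable to be $x_{i,c}$ for some $i\le c$; after substitution its last block becomes $P_{i,c}$ (not $P'$, since no tail leaf of $\form'$ lies in a strict left subtree), whose last variable lies in $\set{x_i,\dots,x_c}$, so the modified left child again has type $[a,c+1)$ while the modified right child, beginning with $P_{c,\cdot}$ or $P'_{c,\cdot}$, has type $[c,\cdot)$. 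The multiplication patterns of \autoref{def:ordForm} are thus inherited verbatim from $\form'$.
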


\begin{proof}
	Let $\form$ be the homogeneous $\abcd$ formula computing $\chsym{n}{d}(\vecx)$ of size $s$, and $\form'$ be the homogeneous \emph{linked} $\abcd$ formula computing $\lchsym{n}{d'}(\vecx)$ of size $s'$.

	We think of the variable $x_{a,b}$ in $\lchsym{n}{d'}(\vecx)$ as a placeholder for the sub-polynomial $\chsym{n}{d}[a, b+1)(\vecx)$\footnote{Sum of monomials in $\chsym{n}{d}(\vecx)$ whose first variable is $a$ and last variable is one of $\set{x_a, \ldots, x_b}$.} of $\chsym{n}{d}(\vecx)$.
	Note that there is a bijection between monomials in $\chsym{n}{(d \cdot d')}(\vecx)$ and those in the polynomial we get by substituting $x_{a,b}$ in $\lchsym{n}{d'}(\vecx)$ with $\chsym{n}{d}[a, b+1)(\vecx)$.

	By \autoref{obs:subformula}, there is homogeneous $\abcd$ formula $\form_{a,b}$, of size $O(s)$ computing $\chsym{n}{d}[a, b+1)(\vecx)$ for every $a, b \in [n+1]$.
	Thus, if we replace every leaf of $\form'$ labelled by $x_{a,b}$ with $\form_{a,b}$, then the resulting formula is a homogeneous $\abcd$ formula computing $\chsym{n}{(d\cdot d')}(\vecx)$ of size $(s \cdot s')$.
\end{proof}

Finally, we observe that if we are given a homogeneous $\abcd$ formula computing the polynomial $\chsym{(n-d+1)}{d}(\vecx)$, then we get a homogeneous multilinear formula computing the non-commutative version of $\esym{n}{d}(\vecx)$.

\begin{observation}\label{obs:chsymToEsym}
	Consider the \emph{elementary symmetric polynomial}
	\[
		\esym{n}{d}(\vecx) = \sum_{1 \leq i_1 < \ldots < i_d \leq n} x_{i_1} \cdots x_{i_d}.
	\]
	If there is a homogeneous $\abcd$ formula computing $\chsym{(n-d+1)}{d}(\vecx)$ of size $s$, then there is a homogeneous multilinear formula computing $\esym{n}{d}(\vecx)$ of size $s$.
\end{observation}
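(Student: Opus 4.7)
The plan is to leave the tree structure of $\form$ untouched and only relabel the variables at the leaves, so $\form'$ will automatically have the same size as $\form$. For each leaf $\ell$ of $\form$ labeled by a variable $x_i$, trace the unique path from $\ell$ to the root. Because $\form$ is homogeneous, a simple induction shows that the position occupied by $\ell$ in every monomial to which it contributes is the same number
\[
k(\ell) \;=\; 1 + \sum_{v} d_{v_L},
\]
where the sum is over multiplication gates $v = v_L \times v_R$ on the path that are entered from their right child, and $d_{v_L}$ is the degree of $v_L$. Relabel $\ell$ by $x_{i + k(\ell) - 1}$ to obtain the new formula $\form'$.

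Correctness and homogeneity will follow from the classical bijection between weakly increasing tuples $1 \leq i_1 \leq \cdots \leq i_d \leq n - d + 1$ and strictly increasing tuples $1 \leq j_1 < \cdots < j_d \leq n$ via $j_k = i_k + (k-1)$. Applied to any parse tree of $\form$, the leaf substitution turns a monomial $x_{i_1} \cdots x_{i_d}$ of $\chsym{n-d+1}{d}(\vecx)$ into $x_{j_1} \cdots x_{j_d}$, a monomial of $\esym{n}{d}(\vecx)$, and this map is a bijection onto all such monomials. Since monomial degrees are preserved, $\form'$ is a homogeneous formula computing $\esym{n}{d}(\vecx)$.

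The one place the abecedarian hypothesis enters is the verification of multilinearity. Fix a multiplication gate $v = v_L \times v_R$ with $d_v > 0$; the non-trivial splitting case of \autoref{def:ordForm} gives $I_{v_L} = [a, c+1)$ and $I_{v_R} = [c, b)$, so all leaves of $v_L$ originally carry labels from $\{x_a, \ldots, x_c\}$ and those of $v_R$ from $\{x_c, \ldots, x_{b-1}\}$. Homogeneity further forces the positions $k(\ell)$ of the leaves of $v_L$ to lie in $\{p, \ldots, p + d_L - 1\}$ and those of $v_R$ to lie in $\{p + d_L, \ldots, p + d_L + d_R - 1\}$, for some offset $p$ determined by the context of $v$. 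Suppose for contradiction that a single variable $x_m$ appears after substitution both at a leaf of $v_L$ with original label $x_i$ and position $k$, and at a leaf of $v_R$ with original label $x_j$ and position $k'$. Then $m = i + k - 1 = j + k' - 1$ with $k < k'$, so $j = i - (k' - k) \leq i - 1 \leq c - 1$, contradicting $j \geq c$. Hence $\form'$ is multilinear. The two degenerate splitting cases, where one of $v_L$, $v_R$ is forced to be a scalar, trivially preserve multilinearity.

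I do not foresee any real obstacle beyond the check above: the size bound is immediate from the leaf-only relabeling, correctness comes from the classical monomial bijection, homogeneity is preserved on the nose, and the abecedarian hypothesis is used exactly where it is needed, namely to guarantee that the shifted index ranges on the two sides of every product cannot collide.
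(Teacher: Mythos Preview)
Your proposal is correct and follows exactly the same approach as the paper: relabel each variable leaf $x_i$ by $x_{i+k-1}$ where $k$ is its position index, and use the standard bijection between weakly and strictly increasing tuples. The paper's own proof is much terser---it simply asserts that homogeneity gives a well-defined position for each leaf and that the $\abcd$ property yields multilinearity---whereas you have spelled out the position formula and supplied the actual verification of multilinearity that the paper leaves to the reader.
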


\begin{proof}
	Suppose $\form$ is a homogeneous $\abcd$ formula computing $\chsym{(n-d+1)}{d}(\vecx)$ of size $s$.
	Since $\form$ is homogeneous, every leaf labelled by a variable can be associated with a position index.
	If a leaf labelled $x_i$ has position $k$ associated with it, then replace the label of that leaf with $x_{i+k-1}$.
	
	Call this formula $\form'$.
	Clearly $\form'$ is a homogeneous formula of size $s$ computing $\esym{n}{d}(\vecx)$.
	Further note that since $\form$ was $\abcd$, $\form'$ is multilinear. 
\end{proof}

\subsection{Proof of the Separation}

We now prove \autoref{thm:main}. 
Let us first recall the statement.

\mainThm*

That $\lchsym{n}{d}(\vecx)$ has a \emph{small} $\abcd$ ABP is not very hard to see.
For the lower bound, we assume that we have been given an $\abcd$ formula $\form$, computing the polynomial $\lchsym{n}{\log n}(\vecx)$, of size $\poly(n)$.
We then keep making changes to this formula till we get a homogeneous multilinear formula computing $\esym{n}{n/2}(\vecx)$ of size $\poly(n)$.
Finally, we use the following theorem of {\Hrubes} and Yehudayoff \cite{HY11} to get a contradiction.

\begin{theorem}[Theorem 1, \cite{HY11}]\label{thm:hom-ml-lb}
	Any homogeneous multilinear formula that computes $\esym{n}{d}(\vecx)$, for $d \leq n/2$, must have size $n \times d^{\Omega(\log d)}$.
\end{theorem}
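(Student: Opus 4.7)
The plan is to combine Raz's multilinear formula decomposition with a partial-derivative-matrix rank measure tailored so that the lower bound is sensitive to the degree $d$, not just the variable count $n$. First I would introduce a random partition of $[n]$ into two blocks $Y$ and $Z$ of size $n/2$ each, and associate to any polynomial $f$ of degree $d$ the bilinear matrix $M_f$ whose rows are indexed by multilinear monomials in $Y$ of some fixed degree $k \approx d/2$, columns by multilinear monomials in $Z$ of degree $d-k$, and entries given by the corresponding coefficients in $f$. The complexity measure I would track is $\mu(f) = \mathsf{rank}(M_f)$, which is sub-additive under addition and, critically, multiplicative on products $f = g \cdot h$ in which $g$ is supported entirely on $Y$ and $h$ entirely on $Z$.

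Next I would exploit a Raz-style structural lemma: any homogeneous multilinear formula $\form$ of size $s$ computing a degree-$d$ polynomial can be rewritten, after an appropriate random restriction, as a sum of a controlled number of ``product-type'' pieces
\[
	f = \sum_{i=1}^{t} g_i \cdot h_i,
\]
where each $g_i$ is supported on a subset of $Y$ of balanced degree in $[d/3, 2d/3]$ and $h_i$ on the complementary subset of $Z$. The key observation of \cite{HY11} is that by iterating the ``central gate'' argument so that at each level the \emph{degree} is halved (rather than the variable count), one obtains $t \leq s \cdot d^{O(\log d)}$ after $O(\log d)$ recursive splittings, as opposed to the weaker $n^{O(\log n)}$ one gets from a naive application. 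Taking the rank bound for each product summand then yields $\mu(f) \leq t \cdot \binom{n/2}{k}$ for a well-chosen $k$.

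For the matching rank lower bound, I would show that $\mu(\esym{n}{d}) = \binom{n/2}{k} \cdot \binom{n/2}{d-k}$ for every $k$, because the coefficient of a multilinear monomial $y_I z_J$ with $|I| = k$ and $|J| = d-k$ is exactly $1$, making the rows of $M_{\esym{n}{d}}$ linearly independent through a classical inclusion-matrix argument (these are the rows of a full-rank set-system incidence matrix when $k + (d-k) = d \leq n/2$). A concentration argument then ensures that after the random restriction associated with the formula decomposition, this rank remains large with high probability on at least one realization, while any formula of size $s < n \cdot d^{c \log d}$ would force $\mu(f) / \binom{n/2}{k}$ to be too small, giving a contradiction. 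Optimizing $k$ near $d/2$ delivers the claimed $n \cdot d^{\Omega(\log d)}$ bound.

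The main obstacle is the degree-sensitive decomposition step. Standard Raz partitioning only yields $n^{\Omega(\log n)}$ because it halves the \emph{number of variables} at each recursive step; here one must halve only the \emph{degree} while retaining control over the variable support of each sub-formula, which requires a carefully designed random restriction scheme that preserves both a product structure and the rank of $M_{\esym{n}{d}}$. This is precisely where the symmetric structure of $\esym{n}{d}$ becomes essential: the uniform $\{0,1\}$ coefficient pattern guarantees that any balanced row--column split of degrees continues to witness a full-rank submatrix, so the rank bound survives the restriction argument even as the degree budget shrinks.
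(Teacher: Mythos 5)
This theorem is imported from Hrub\v{e}s--Yehudayoff and used as a black box in the paper, so your proposal is being measured against their argument. The core of your plan does not work: the rank measure you define is degenerate on $\esym{n}{d}$. With rows indexed by degree-$k$ multilinear monomials in $Y$ and columns by degree-$(d-k)$ multilinear monomials in $Z$, the $(y_I, z_J)$ entry of $M_{\esym{n}{d}}$ is the coefficient of $y_I z_J$ in $\esym{n}{d}$, which equals $1$ for \emph{every} such pair; hence $M_{\esym{n}{d}}$ is the all-ones matrix and has rank $1$. (The value ${n/2 \choose k}\cdot{n/2 \choose d-k}$ you claim exceeds the number of rows, so it cannot be the rank of any matrix of this shape, and identical rows are certainly not linearly independent; the full-rank inclusion-matrix fact you invoke is about the $0/1$ incidence matrix of $k$-subsets versus $d$-subsets, a different object from this coefficient matrix.) More generally, for any partition of the variables into $Y$ and $Z$ one has $\esym{n}{d} = \sum_{j=0}^{d} \esym{|Y|}{j}(Y)\cdot \esym{|Z|}{d-j}(Z)$, so even the full Raz partial-derivative matrix of $\esym{n}{d}$ (all degrees at once) has rank at most $d+1$. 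This is precisely why Raz's random-partition rank method cannot prove lower bounds for elementary symmetric polynomials and why a different technique was needed. A secondary problem is that your bound $\mu(f) \le t \cdot {n/2 \choose k}$ is vacuous, since ${n/2 \choose k}$ is already the total number of rows, and a decomposition with $t \le s\cdot d^{O(\log d)}$ summands points in the wrong direction for extracting a $d^{\Omega(\log d)}$ lower bound on $s$.

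What Hrub\v{e}s and Yehudayoff actually do is a covering (support-counting) argument with no rank measure and no random restriction. A homogeneous formula of size $s$ for a degree-$d$ polynomial decomposes into a sum of roughly $s$ \emph{log-products} $g_1 \cdots g_t$ with $t = \Theta(\log d)$ factors whose degrees $d_1, \ldots, d_t$ sum to $d$ and decrease geometrically; multilinearity forces the factors to live on disjoint variable sets $S_1, \ldots, S_t$. The degree-$d$ multilinear monomials that can appear in such a product are exactly those containing $d_i$ variables from $S_i$ for each $i$, and there are $\prod_i {|S_i| \choose d_i}$ of these, which an entropy/concentration estimate bounds by ${n \choose d}\cdot d^{-\Omega(\log d)}$ when $d \le n/2$. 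Since every one of the ${n \choose d}$ monomials of $\esym{n}{d}$ has coefficient $1 \neq 0$, each must lie in the support of at least one summand, forcing at least $d^{\Omega(\log d)}$ summands; a separate accounting supplies the extra factor of $n$. If you want to salvage a rank-based route you would need a measure that is not annihilated by the identity $\esym{n}{d} = \sum_j \esym{|Y|}{j}\cdot\esym{|Z|}{d-j}$, which rules out every fixed-bipartition coefficient matrix.
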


\noindent Let us now complete the proof of our main theorem.

\begin{proofof}{\autoref{thm:main}}
	An $\abcd$ ABP of size $O(nd)$ computing $\lchsym{n}{d}(\vecx)$ is the following.

	\vspace{1em}
	\begin{tikzpicture}[thick, node distance=3cm]
		\node (start) {};
		\node[circle, draw=black] (l11) at ($(start)+(2,1.5)$) {$s_1$};
		\node (l12) at ($(start)+(2,.75)$) {$\vdots$};
		\node (l10) at ($(start)+(2,0)$) {$\vdots$};
		\node (l13) at ($(start)+(2,-.75)$) {$\vdots$};
		\node[circle, draw=black] (l14) at ($(start)+(2,-1.5)$) {$s_n$};
		\node at ($(start)+(2,-2.5)$) {$0$};

		\node (l21) at ($(start)+(4,1.5)$) {$\cdots$};
		\node (l21) at ($(start)+(4,.75)$) {$\cdots$};
		\node (l23) at ($(start)+(4,-.75)$) {$\cdots$};
		\node (l24) at ($(start)+(4,-1.5)$) {$\cdots$};

		\node (l31) at ($(start)+(6,1.5)$) {$\vdots$};
		\node[circle, draw=black] (l32) at ($(start)+(6,.75)$) {$i$};
		\node (l30) at ($(start)+(6,0)$) {$\vdots$};
		\node (l33) at ($(start)+(6,-.75)$) {$\vdots$};
		\node (l34) at ($(start)+(6,-1.5)$) {$\vdots$};
		\node at ($(start)+(6,-2.5)$) {$k-1$};

		\node (l41) at ($(start)+(8,1.5)$) {$\vdots$};
		\node[circle, draw=black] (l42) at ($(start)+(8,.75)$) {$i$};
		\node (l40) at ($(start)+(8,0)$) {$\vdots$};
		\node[circle, draw=black] (l43) at ($(start)+(8,-.75)$) {$j$};
		\node (l44) at ($(start)+(8,-1.5)$) {$\vdots$};
		\node at ($(start)+(8,-2.5)$) {$k$};

		\node (l51) at ($(start)+(10,1.5)$) {$\cdots$};
		\node (l52) at ($(start)+(10,.75)$) {$\cdots$};
		\node (l53) at ($(start)+(10,-.75)$) {$\cdots$};
		\node (l54) at ($(start)+(10,-1.5)$) {$\cdots$};

		\node[circle, draw=black] (l61) at ($(start)+(12,1.5)$) {$t_1$};
		\node (l62) at ($(start)+(12,.75)$) {$\vdots$};
		\node (l60) at ($(start)+(12,0)$) {$\vdots$};
		\node (l63) at ($(start)+(12,-.75)$) {$\vdots$};
		\node[circle, draw=black] (l64) at ($(start)+(12,-1.5)$) {$t_n$};
		\node at ($(start)+(12,-2.5)$) {$d$};

		\draw[->](l32) -- node[above] {$x_{i,i}$} (l42);
		\draw[->](l32) -- node[right] {$x_{i,j}$} (l43);
	\end{tikzpicture}

	\vspace{1em}
	The ABP has $d+1$ layers, labelled $0$ through $d$, each with $n$ nodes.
	Between any consecutive layers $k-1$ and $k$, where $1 \leq k \leq d$, there is an edge from the $i$-th node in layer $k-1$ to the $j$-th node in layer $k$ layer if $i \leq j$.
	The label on this edge is $x_{i,j}$.
	All the nodes in the first layer are start nodes, and all the ones in the last layer are terminal nodes.

	It is easy to check, by induction, that the polynomial computed between $s_a$ and the $b$-th vertex in layer $k$ computes $\chsym{n}{k}[a,b+1)(\vecx)$. 
	Thus the polynomial computed by the $\abcd$ ABP constructed above is indeed $\chsym{n}{d}(\vecx)$, and its size is clearly $O(nd)$.
	
	Let us now move on to proving the lower bound against $\abcd$ formulas.
	We show that there is a fixed constant $\epsilon_0$ such that any $\abcd$ formula computing $\lchsym{n/2}{\log n}(\vecx)$ must have size atleast $\Omega(n^{\epsilon_0 \log \log n})$.
	Suppose this is not the case. 
	Then for every $\epsilon > 0$, there is an $\abcd$ formula $\form'(\epsilon)$ computing $\lchsym{n/2}{\log n}(\vecx)$ of size $O(n^{\epsilon \log \log n})$.
	
	Without loss of generality, we can assume that $\form'(\epsilon)$ has fan-in $2$.
	Further, by \autoref{lem:depth-reduction}, we can reduce the depth of $\form'(\epsilon)$ to $\log$-depth.
	That is, we get an $\abcd$ formula $\form'_1(\epsilon)$ computing $\lchsym{n/2}{\log n}(\vecx)$ of depth $O(\epsilon \log n \log \log n)$ and size $O(n^{c_1 \epsilon \log \log n})$.
	Here $c_1$ is a fixed constant independent of $\epsilon$. 

	Next, since the degree of the polynomial being computed is \emph{small}, \autoref{lem:homogenisation} implies that $\form'_1(\epsilon)$ can in fact be homogenised without much blow-up in size. 
	In other words, there is a homogeneous $\abcd$ formula computing $\lchsym{n/2}{\log n}(\vecx)$ of size $O(n^{c_1 c_2 \epsilon \log \log n})$, where $c_2$ is again a fixed constant independent of $\epsilon$. 
	Let this formula be $\form'_2(\epsilon)$.

	By \autoref{obs:orderInterval}, we can then use $\form'_2(\epsilon)$ to get a homogeneous linked $\abcd$ formula $\form'_3(\epsilon)$ of size $O(n^{c_1 c_2 \epsilon \log \log n})$ that computes the same polynomial.
	Further, because of \autoref{obs:UBonCHSYMlogDeg}, we know that there is a homogeneous $\abcd$ formula, say $\form$, of size $\poly(n) = O(n^{c_1 c_2 \epsilon \log \log n})$ that computes $\chsym{n/2}{\log n}(\vecx)$.
	
	With $\form$ and $\form'_3(\epsilon)$ in hand, we get a homogeneous $\abcd$ formula $\chsym{n/2}{\log^2 n}(\vecx)$ because of \autoref{obs:amplifyDeg}.
	To get such a formula for $\chsym{n/2}{n/2}(\vecx)$, we need to use \autoref{obs:amplifyDeg} at most $k$ times where 
	\[
		(\log n)^k = \frac{n}{2} \implies k = O \inparen{\frac{\log n}{\log \log n}}.
	\]

	Thus, using \autoref{obs:amplifyDeg} repeatedly at most $O(\sfrac{\log n}{\log \log n})$ times, we get that there is a homogeneous $\abcd$ formula, $\form(\epsilon)$, computing $\chsym{n/2}{n/2}(\vecx)$ of size
	\[
		O(n^{(c_1 c_2 \epsilon \log \log n) \cdot (\sfrac{\log n}{\log \log n})}) = O(n^{(c_1 c_2 \epsilon \log n)}).
	\]
	
	By \autoref{obs:chsymToEsym}, we know that $\form(\epsilon)$ can be used to get a homogeneous multilinear formula, $\form_1(\epsilon)$, computing $\esym{n-1}{n/2}(\vecx)$ of size $O(n^{(c_1 c_2 \epsilon \log n)})$.
	Finally, \autoref{thm:hom-ml-lb} tells us that there is a constant $\delta$ such that any homogeneous multilinear formula computing $\esym{n-1}{n/2}(\vecx)$ must have size at least $n^{\delta \cdot \log n}$.	
	For $\epsilon = \sfrac{\delta}{2 c_1 c_2}$, this contradicts the existence of $\form_1(\epsilon)$ and hence $\form'(\epsilon)$.

	Thus, it must be the case that any $\abcd$ formula computing $\lchsym{n/2}{\log n}(\vecx)$ has size at least $n^{\Omega(\log \log n)}$.
	This completes the proof.
\end{proofof}

\section{Proofs of the Remaining Statements}\label{sec:simpleStatements}

In this section we give proof ideas of the remaining statements mentioned in the introduction.

\subsection{Formula Lower Bounds from Structured Formula Lower Bounds}\label{sec:FormLBfromOrdFormLB}

\FormLBfromOrdFormLB*

\begin{proof}
	By \autoref{thm:main}, we know that the ABP complexity of $\lchsym{\log n}{n}(\vecx)$ is $\poly(n)$.
	Therefore $s \geq n^{\Omega(1)}$.
	Further, note that the polynomial is $\abcd$ with respect to a bucketing system of size $O(\log n)$.
	Therefore, by \autoref{thm:orderForm}, if there is a formula $\form$ computing $\lchsym{\log n}{n}(\vecx)$ of size $s$, then there is an $\abcd$ formula computing it of size $\poly(s)$.
	This immediately implies the given statement.
\end{proof}

\FormLBfromHomFormLB*

\begin{proof}
	By \autoref{thm:main}, we know that the ABP complexity of $\lchsym{n}{\log n}$ is $\poly(n)$.
	Further, the degree of the polynomial is $O(\log n)$.
	Thus, by \autoref{lem:homogenisation}, if there is a formula computing $\lchsym{n}{\log n}(\vecx)$ of size $s$, then there is a homogeneous formula computing it of size $\poly(s)$.
	This immediately implies the given statement.
\end{proof}

\FormLBfromHomDetLB*

\begin{proof}
	Nisan \cite{N91} had shown that the ABP complexity of $\Det{n}(\vecx)$ is $2^{O(n)}$. 
	Thus, by \autoref{lem:homogenisation}, if there is a formula $\form$ computing $\Det{n}(\vecx)$ of size $2^{O(n)}$, then there is a homogeneous formula computing it of size $2^{O(n)}$.
	This immediately implies the given statement.
\end{proof}

\FormLBfromHomImmLB*

\begin{proof}
	Clearly, the ABP complexity of $\imm{n}{\log n}(\vecx)$ is $\poly(n)$.
	Thus, by \autoref{lem:homogenisation}, if there is a formula computing $\imm{n}{\log n}(\vecx)$ of size $s$, then there is a homogeneous formula computing it of size $\poly(s)$.
	This immediately implies the given statement.
\end{proof}

\subsection{Known Relations in the Non-Commutative Setting that Continue to Hold with the Abecedarian Restriction}

\usualInclusions*

\begin{proof}
	Suppose $f \in \abcncVF$. 
	Then $f$ is $\abcd$, and in particular $f \in \ncVF$.
	But we know that $\ncVF \subseteq \ncVBP$, and so $f \in \ncVBP$.
	By \autoref{obs:orderABP}, this implies that $f \in \abcncVBP$.

	Similarly, suppose $f \in \abcncVBP$. 
	Then $f$ is $\abcd$, and $f \in \ncVBP$.
	But $\ncVBP \subseteq \ncVP$, and so $f \in \ncVP$.
	By \autoref{obs:orderCkt}, this implies that $f \in \abcncVP$.
\end{proof}

\ABPtoForm*

\begin{proof}
	The formula we get using the usual divide-and-conquer algorithm has the property that polynomials computed at any of its gate is a polynomial computed between two vertices in the ABP.
	Thus by definition of $\abcd$ ABPs, the statement follows via the usual algorithm.
\end{proof}


\section*{Acknowledgements}
We are thankful to Ramprasad Saptharishi, Mrinal Kumar, C. Ramya and especially Anamay Tengse for the discussions at various stages of this work.
We would also like to thank Ramprasad Saptharishi, Anamay Tengse and Kshitij Gajjar for helping with the presentation of the paper.


\bibliographystyle{alphaurlpp}
\bibliography{references,crossref}
\end{document}